\newenvironment{proof}{{\noindent\it\textbf{Proof}}\quad}{\hfill $\square$\par}
\def\dse#1{\vskip 0.6cm\noindent
        {\large\bf #1}
        \vskip 0.4cm}
\def\dse#1{\vskip 0.6cm\noindent
        {\large\bf #1}
        \vskip 0.4cm}
\begin{document}\large
\newtheorem{lemma}{Lemma}[section]
\newtheorem{theorem}[lemma]{Theorem}
\newtheorem{example}[lemma]{Example}
\newtheorem{definition}[lemma]{Definition}
\newtheorem{proposition}[lemma]{Proposition}
\newtheorem{conjecture}[lemma]{Conjecture}
\newtheorem{corollary}[lemma]{Corollary}
\newtheorem{remark}{Remark}
\begin{center}
\textbf{\LARGE{New entanglement-assisted quantum codes from negacyclic codes}}\footnote { E-mail addresses: chenxiaojing0909@ahu.edu.cn (X.Chen), xingbolu@yeah.net (X.Lu), zhushixin@hfut.edu.\\cn(S.Zhu), jiangw000@163.com(W.Jiang), xindi.wang@ahu.edu.cn (X.Wang).
}\\
\end{center}

\begin{center}
{ { Xiaojing Chen$^1$, Xingbo Lu$^1$, \ Shixin Zhu$^{2,3}$, \  Wan Jiang$^4$, \ Xindi Wang$^1$  } }
\end{center}

\begin{center}
\textit{1\  School of Internet, Anhui University, Hefei 230039, Anhui, P.R.China\\
2\ School of Mathematics, Hefei University of
Technology, Hefei 230601, Anhui, P.R.China\\
3\ Key Laboratory of Knowledge
Engineering with Big Data, Ministry
of Education, Hefei University of
Technology, Hefei 230601, Anhui, P.R.China\\
4\ School of Computer and Information, Hefei University of Technology, Hefei 230601, Anhui, P.R.China}
\end{center}

\noindent\textbf{Abstract:} The theory of entanglement-assisted quantum error-correcting codes (EAQECCs) is a generalization of the standard stabilizer quantum error-correcting codes, which can be possibly constructed from any classical codes by relaxing the duality condition and utilizing preshared entanglement between the sender and receiver. In this paper, a new family of EAQECCs is constructed from negacyclic codes of length $n=\frac{q^2+1}{a}$, where $q$ is an odd prime power, $a=\frac{m^2+1}{2}$ and $m$ is an odd integer. 
Some new entanglement-assisted quantum maximum distance separable (EAQMDS) codes are obtained in the sense that their parameters are not covered by the previously known ones.\\


\noindent\textbf{Keywords}: EAQECCs $\cdot$ Negacyclic codes $\cdot$ EAQMDS codes $\cdot$ Defining set       

\section{Introduction}

QECCs have been introduced as a powerful tool for protecting quantum information from decoherence and quantum noise. The theory of QECCs has experienced tremendous growth since the works of Steane \cite{ref50} and Shor \cite{ref51}. It was shown that the construction of QECCs can be reduced to that of classical linear codes satisfied with certain dual-containing condition \cite{ref41}. Until 2006, Brun $\it et~al.$ \cite{ref1}, a new quantum code named entanglement-assisted quantum error-correcting code (EAQECC), was introduced which overcomes the barrier of the dual-containing condition in constructing standard quantum codes from classical codes. Since then, much work on the constructions of EAQECCs from classical error-correcting codes has been done (see \cite{ref54,ref5,ref7,ref26,ref21,ref3,ref14,ref22,ref17,ref4,ref53}, and the relevant references therein).


It is well known that EAQECCs can be constructed from arbitrary classical codes by utilizing pre-shared entanglement between the sender and receiver. However, it is not an easy task to determine the number of entangled states $c$. Scholars have proposed several means to solve this problem. Decomposing the defining set is one of the solutions. L$\ddot{\textrm{u}}$ $\it et~al.$ \cite{ref10} constructed some EAQMDS codes by decomposing the defining set of cyclic codes, which transmitted the calculation of entangled states $c$ into determining a subset of the defining set of the underlying codes. Subsequently, Chen $\it et~al.$ \cite{ref11} and Chen $\it et~al.$ \cite{ref12} generalized this mean to negacyclic codes and constacyclic codes respectively. Then the theory of EAQECCs obtained tremendous development and many EAQMDS codes with certain numbers of entangled states $c$ [see \cite{ref45,ref55,ref19,ref20,ref57,ref58,ref77} and the relevant references therein] have been constructed.


Actually, the larger the
minimum distance of EAQECCs are, the more the entangled states $c$ will be employed.
However, it is difficult to analyze the accurate parameters if the value of
entangled states $c$ is too large or general. Recently, constructing EAQMDS codes with general parameters have obtained great growth \cite{ref56,ref32}. Some new EAQMDS codes of length $n=q^2+1$ and entanglement-assisted quantum almost maximum distance separable (MDS) codes of length $n=q^4-1$ with general parameters are constructed by Qian and Zhang in \cite{ref23}. Later, Wang $\it et~al.$ \cite{ref24} obtained several classes of EAQECCs of length $n=q^2+1$ based on constacyclic codes. Moreover, almost all of those known results about EAQECCs with the same length are some special cases of theirs. Very recently, Pang $\it et~al.$ \cite{ref43} gave some EAQECCs with general parameters of lengths $n=q^2+1$ and $n=\frac{q^2+1}{2}$ from negacyclic MDS codes and constacyclic MDS codes respectly. 
We note that most of their lengths are divided by $q^2+1$. Enlightened by their work, we constructed four families of EAQECCs from negacyclic codes with unified form and general parameters. Further, a large number of EAQMDS codes with minimum distance $d\leq\frac{n+2}{2}$
can be obtained. 
The parameters of EAQECCs are listed in Table 1, where $n=\frac{2(q^2+1)}{a}$, $q$ is an odd prime power, $a=m^2+1$ and $m$ is an odd integer.





This paper is organized as follows. In Sect.2, some basic background and results about negacyclic codes are reviewed. In Sect.3, we recall the relevant concepts and properties of EAQECCs. In Sect.4, four families of new EAQECCs and EAQMDS codes from negacyclic codes with unified form are presented. Sect.5 concludes the paper.

\begin{table}[h]
\setlength{\abovecaptionskip}{0.cm}
\setlength{\belowcaptionskip}{0.2cm}
\small
\centering
\caption{\newline  ~The length $n=\frac{2(q^2+1)}{a}$ parameters of EAQECCs.}
\begin{tabular}{c|c|c|c}
	\hline
	$[[n,k,d;c]]_q$ & $q$ &$m$ & $\xi$  	\\
	\hline
	$k = n-4\alpha q+4(a-m)(\alpha -\xi ) +2a\alpha^2-2a+4m-1$&$q=a\xi+a-m$& $m\equiv 1$ mod $4$ & odd \\
	&&  &  \\ 
	$d = 2[\alpha q+(a-m)\xi+a-2m +1 $&&  &  \\ 
	\cline{3-4}
    & & $m\equiv 3$ mod $4$ & even \\
	$c = 2\alpha[(a\alpha +2(a-m)]+2a-4m+1 $&&  &  \\ 
	&&  &  \\
	\hline
	
	$k = n-4\alpha q+4(a-m)(\alpha -\xi ) +2a\alpha^2-2a+4m-2$ & $q=a\xi+a-m$&$m\equiv 1$ mod $4$ & even \\
	&&  &  \\ 
	$d = 2[\alpha q+(a-m)\xi+a-2m +1] $&&  &  \\ 
	\cline{3-4}
	& & $m\equiv 3$ mod $4$ & odd \\
	$c = 2\alpha[(a\alpha +2(a-m)]+2a-4m $&&  &  \\ 
	&&  &  \\
	\hline
	
	$k = n-4\alpha q+4m(\alpha-\xi)+2a\alpha^2-1 $ & $q=a\xi+m$&$m\equiv 1$ mod $4$ & odd \\
	&&  &  \\ 
	$d = 2(\alpha q+m\xi +1) $&&  &  \\ 
	\cline{3-4}
	& & $m\equiv 3$ mod $4$ & even \\
	$c = 2\alpha(a\alpha +2m)+1 $&&  &  \\ 
	&&  &  \\
	\hline
	
    $k = n-4\alpha q+4m(\alpha-\xi)+2a\alpha^2-2 $ & $q=a\xi+m$&$m\equiv 1$ mod $4$ & even \\
	&&  &  \\ 
	$d = 2(\alpha q+m\xi +1) $&&  &  \\ 
	\cline{3-4}
	& & $m\equiv 3$ mod $4$ & odd \\
	$c = 2\alpha(a\alpha +2m) $&&  &  \\ 
	&&  &  \\
	\hline
				
\end{tabular}
\end{table} 

\section{Review of Negacyclic Codes} 
In this section, we will review some basic concepts and relevant results about negacyclic codes. 

Throughout this paper, let $\mathbb{F}_{q^{2}}$ be a finite field with $q^{2}$ elements and $\mathbb{F}_{q^{2}}^{n}$ be the $n$-dimensional row vector space over $\mathbb{F}_{q^{2}}$,
where $q$ is a power of prime $p$ and $n$ is a positive integer. A linear code $\mathcal{C}$ of length $n$ over $\mathbb{F}_{q^{2}}$ is a nonempty subspace of $\mathbb{F}_{q^{2}}^{n}$. A $q^2$-ary linear code $\mathcal{C}$ of length $n$ is negacyclic if $\mathcal{C}$ is invariant under the permutation of $\mathbb{F}_{q^{2}}^{n}$
$$(c_0,c_1,\ldots,c_{n-1})\rightarrow(-c_{n-1},c_0,\ldots,c_{n-2}).$$
We identify any codeword $\textbf{c}=(c_0,c_1,\ldots,c_{n-1})\in\mathcal{C}$ with its polynomial representation $c(x)=c_0+c_1x+\cdots+c_{n-1}x^{n-1}\in \mathbb{F}_{q^{2}}[x]/\langle x^{n}+1\rangle$.
Then $\mathcal{C}$ is a negacyclic code of length $n$ over $\mathbb{F}_{q^{2}}$ if and only if $\mathcal{C}$ is an ideal of the quotient ring $\mathbb{F}_{q^{2}}[x]/\langle x^{n}+1\rangle$. Moreover, every ideal of $\mathbb{F}_{q^{2}}[x]/\langle x^{n}+1\rangle$ is a principal ideal, so $\mathcal{C}$ can be generated by a monic divisor $g(x)$
of $x^{n}+1$. Let $\mathcal{C}=\langle g(x) \rangle$, where $g(x)$ is a unique monic polynomial which has minimal degree in
$\mathcal{C}$. Then $g(x)$ is called the generator polynomial of the code $\mathcal{C}$ and the dimension of $\mathcal{C}$
is $n-k$, where $k=\deg(g(x))$.

For any element $a\in\mathbb{F}_{q^{2}}$,
we use $\overline{a}$ to denote the conjugate $a^{q}$ of $a$. Given two vectors
$\mathbf{x}=(x_0,x_1,\ldots,x_{n-1})$ and $\mathbf{y}=(y_0,y_1,\ldots,y_{n-1})\in \mathbb{F}_{q^{2}}^{n}$,
their Hermitian inner product is defined as
\[ \langle \mathbf{x},\mathbf{y}\rangle=x_0\overline{y}_0+x_1\overline{y}_1+\cdots+x_{n-1}\overline{y}_{n-1}\in\mathbb{F}_{q^{2}}.\]
The vectors $\mathbf{x}$ and $\mathbf{y}$ are called orthogonal with respect to the Hermitian
inner product if $\langle \mathbf{x},\mathbf{y}\rangle=0$.
For a linear code $\mathcal{C}$, the Hermitian
dual code of $\mathcal{C}$ is defined as
\[\mathcal{C}^{\bot_{h}}=\{\mathbf{x}\in\mathbb{F}_{q^{2}}^{n}~|~\langle \mathbf{x},\mathbf{y}\rangle=0 ~for~ all~\mathbf{y}\in\mathcal{C}\}.\]
A linear code $\mathcal{C}$ of length $n$ is called Hermitian
self-orthogonal if $\mathcal{C}\subseteq\mathcal{C}^{\bot_{h}}$, and it is called Hermitian self-dual if
$\mathcal{C}=\mathcal{C}^{\bot_{h}}$.

Note that $x^{n}+1$ has no repeated root over $\mathbb{F}_{q^{2}}$ if and only if $\gcd(n,q)=1$. From now on this paper, assume that $\gcd(n,q)=1$. Let $\beta$ be a primitive $2n$-th root of unity in $\mathbb{F}_{q^{2m}}$, where $m$ is the multiplicative order of ${q^2}$ modulo $2n$. 
Let $\gamma=\beta^2$, then $\xi$ is a primitive
$n$-th root of unity. Hence, the root of $x^{n}+1$ is $\beta\gamma^j=\beta^{1+2j}, 0\leq j \leq n-1$.

Let $\mathbb{Z}_{2n}=\{0,1,\cdots,2n-1\}$ and $\varOmega_{2n}$ be the set of form $1+2j$ in $\mathbb{Z}_{2n}$. For any $i\in \mathbb{Z}_{2n}$, let $C_i$ be the
$q^2$-cyclotomic coset modulo $2n$ containing $i$,
\[C_i=\{i,iq^2,iq^4,\ldots,iq^{2(l_i-1)}\},\]
where $l_i$ is the smallest positive integer such that $iq^{2l_i}\equiv i \mod 2n$. In fact, $M_i(x)=\Pi_{t\in C_i}(x-\beta^t)$ is the minimal polynomial of $\beta^i$ over $\mathbb{F}_{q^2}$. Hence,  Each $C_i$
corresponds to an irreducible divisor of $x^{n}+1$ over $\mathbb{F}_{q^{2}}$. For a negacyclic code $\mathcal{C}=\langle g(x)\rangle$ of length $n$ over $\mathbb{F}_{q^2}$, its defining set is the set $Z=\{i\in\varOmega_{2n}~|~g(\beta^i)=0\}$.
Obviously, the defining set of $\mathcal{C}$ must be a union of some $q^2$-cyclotomic cosets modulo $2n$ and
dim$(\mathcal{C})=n-|Z|$, where $|Z|$ denotes the cardinality of the set $Z$. Next, we give the famous Singleton bound for linear codes first.

\begin{theorem} \label{th:2.1} \emph{\textbf{\cite{ref27} (Singleton bound)}} Let $\mathcal{C}$ be an $[n,k,d]$ linear code of length $n$ with dimension $k$ and minimum distance $d$, then $$n-k\geq d-1.$$ If the equality $n-k=d-1$ holds, then the code is an MDS code. 
\end{theorem}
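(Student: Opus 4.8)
The plan is to establish the inequality $n-k \ge d-1$ by a coordinate-deletion (puncturing) argument, after which the final MDS assertion is merely the definition of the equality case and needs no separate work.

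First I would consider the $\mathbb{F}_{q^2}$-linear projection $\pi$ that deletes the last $d-1$ coordinates of a vector, restricted to $\mathcal{C}$, so that $\pi \colon \mathcal{C} \to \mathbb{F}_{q^2}^{\,n-d+1}$. The crucial claim is that $\pi$ is injective on $\mathcal{C}$. Indeed, suppose $\mathbf{c}_1,\mathbf{c}_2 \in \mathcal{C}$ satisfy $\pi(\mathbf{c}_1)=\pi(\mathbf{c}_2)$; then $\mathbf{c}_1-\mathbf{c}_2 \in \mathcal{C}$ vanishes on the first $n-d+1$ coordinates and therefore has Hamming weight at most $d-1$. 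Since every nonzero codeword of $\mathcal{C}$ has weight at least $d$, this forces $\mathbf{c}_1=\mathbf{c}_2$, proving injectivity.

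From injectivity I would conclude $k=\dim \mathcal{C}=\dim \pi(\mathcal{C}) \le n-d+1$, because $\pi(\mathcal{C})$ is a subspace of $\mathbb{F}_{q^2}^{\,n-d+1}$; rearranging gives $n-k \ge d-1$. An equally clean alternative is to fix a parity-check matrix $H$ of $\mathcal{C}$, an $(n-k)\times n$ matrix of rank $n-k$, and use that $d$ equals the least number of linearly dependent columns of $H$: then every set of $d-1$ columns is linearly independent, whence $d-1 \le \operatorname{rank}(H)=n-k$. Either route delivers the stated bound.

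The only step demanding genuine care is the injectivity claim, where one must observe that deleting $d-1$ coordinates lowers a codeword's weight by at most $d-1$; everything else is routine linear algebra and counting. Finally, the clause asserting that equality $n-k=d-1$ makes $\mathcal{C}$ an MDS code requires no argument, as this equality is exactly the defining condition of an MDS code.
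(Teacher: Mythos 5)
Your argument is correct: the puncturing map is injective precisely because a nonzero codeword supported on at most $d-1$ coordinates would violate the minimum distance, and the dimension count then gives $n-k\geq d-1$, with the MDS clause being purely definitional. The paper offers no proof of its own here---it cites this classical result from MacWilliams and Sloane---and your puncturing argument (as well as your parity-check alternative) is exactly the standard textbook derivation, so there is nothing to reconcile.
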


From the property of pseudo-cyclic codes in \cite{ref29}, we can obtain the following BCH bound of negacyclic codes.

\begin{theorem} \label{th:2.2} \emph{\textbf{ \cite{ref29} (BCH bound for negacyclic codes)}} ~Assume that $n$ and $q$ are coprime.
	Let $\mathcal{C}$ be a negacyclic code of length $n$ over  $\mathbb{F}_{q^{2}}$, and let its
	generator polynomial $g(x)$ have the elements $\{\beta^{1+2j}~|~b\leq j\leq b+\delta-2\}$ as the roots, where
	$\beta$ is a primitive $2n$-th root of unity and $b$ is a integer. Then the minimum distance of $\mathcal{C}$ is at least $\delta$.
\end{theorem}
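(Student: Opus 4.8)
The plan is to argue by contradiction using the classical Vandermonde technique. Suppose there exists a nonzero codeword $c(x)=\sum_{i=0}^{n-1}c_ix^i\in\mathcal{C}$ whose Hamming weight $w$ satisfies $w\leq\delta-1$. Let its nonzero coordinates occur at positions $i_1<i_2<\cdots<i_w$ with $0\leq i_k\leq n-1$, so that the values $c_{i_1},\ldots,c_{i_w}$ are all nonzero. Since $\{\beta^{1+2j}\mid b\leq j\leq b+\delta-2\}$ are roots of $g(x)$ and $g(x)\mid c(x)$, each of these elements is also a root of $c(x)$; that is, $c(\beta^{1+2j})=0$ for every $j$ in the range.

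First I would write out these vanishing conditions explicitly. For each $j$ with $b\leq j\leq b+\delta-2$ we have
$$\sum_{k=1}^{w}c_{i_k}\big(\beta^{1+2j}\big)^{i_k}=0.$$
The key algebraic step is to separate the exponent as $(\beta^{1+2j})^{i_k}=\beta^{i_k}\big(\beta^{2i_k}\big)^{j}$, which lets me rewrite each equation in the form $\sum_{k=1}^{w}a_k\theta_k^{\,j}=0$, where $a_k=c_{i_k}\beta^{i_k}$ and $\theta_k=\beta^{2i_k}=\gamma^{i_k}$ with $\gamma=\beta^2$. Because $\gamma$ is a primitive $n$-th root of unity and the exponents $i_k$ are distinct integers in $\{0,\ldots,n-1\}$, the nodes $\theta_1,\ldots,\theta_w$ are pairwise distinct.

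Next, I would select exactly $w$ of these equations, say those for $j=b,b+1,\ldots,b+w-1$, which is permissible because $w\leq\delta-1$ guarantees at least that many consecutive roots. Factoring $\theta_k^{\,b}$ out of the $k$-th column produces a homogeneous linear system in the unknowns $a_1\theta_1^{\,b},\ldots,a_w\theta_w^{\,b}$ whose coefficient matrix is the $w\times w$ Vandermonde matrix $\big(\theta_k^{\,l}\big)_{0\leq l\leq w-1,\,1\leq k\leq w}$. Its determinant equals $\prod_{1\leq k<l\leq w}(\theta_l-\theta_k)$, which is nonzero by the distinctness just established. Hence the system admits only the trivial solution, forcing $a_k=0$ and therefore $c_{i_k}=0$ for every $k$, contradicting the choice of the $i_k$ as nonzero positions. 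It follows that no nonzero codeword can have weight at most $\delta-1$, so the minimum distance satisfies $d\geq\delta$.

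The step I expect to require the most care is the exponent factorization together with the verification that the resulting nodes $\theta_k$ are genuinely distinct: this is exactly where the negacyclic structure enters, since the roots sit at the odd powers $\beta^{1+2j}$ of a $2n$-th root of unity rather than at consecutive powers of an $n$-th root, and it is what permits the clean reduction to a Vandermonde system. Once the problem is cast in this form, the nonvanishing of the Vandermonde determinant — equivalently, an appeal to the pseudo-cyclic framework of \cite{ref29} — finishes the argument immediately.
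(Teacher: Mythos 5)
Your argument is correct: the factorization $(\beta^{1+2j})^{i_k}=\beta^{i_k}\gamma^{i_k j}$ with $\gamma=\beta^{2}$ a primitive $n$-th root of unity does yield pairwise distinct nodes $\theta_k=\gamma^{i_k}$ for distinct $i_k\in\{0,\dots,n-1\}$, and the resulting $w\times w$ Vandermonde system forces every $c_{i_k}=0$, giving the contradiction. Note that the paper itself offers no proof of this theorem --- it is quoted from the pseudocyclic-codes reference \cite{ref29} --- and your Vandermonde argument is exactly the standard proof underlying that citation, so there is nothing to reconcile; the only cosmetic remark is that the determinant is $\prod_{k<l}(\theta_l-\theta_k)$ up to sign depending on the transpose convention, which does not affect nonvanishing.
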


\section{Review of EAQECCs}

Let $q$ be a prime power. A $q$-ary $[[n, k, d; c]]$ entanglement-assisted quantum error-correcting code (EAQECC) is denoted by $[[n, k, d; c]]_q$, which encodes $k$ information qudits into $n$ channel qudits with the help of $c$ pairs of maximally entangled states and can correct up to $\lfloor \frac{d-1}{2}\rfloor$ errors, where qudits are quantum digits and $d$ is the minimum distance of the code. 

As we mentioned above, although it is possible to construct EAQECCs from any classical linear codes over $\mathbb{F}_{q^2}$, it is not easy to calculate the number of entangled states $c$. After the pioneering work of L$\ddot{\textrm{u}}$ $\it et~al.$ \cite{ref10} about decomposing defining set, some scholars have generalized their methods to more universal cases \cite{ref11,ref12}. Therefore, we can construct EAQECCs from negacyclic codes by using the following theorem.

\begin{theorem} \label{th:3.1}
	\emph{\textbf{ \cite{ref45}}} Let $\mathcal{C}$ be a negacyclic code of length $n$ with defining set $Z$ over $\mathbb{F}_{q^2}$.
	Suppose that $Z = Z_{1} \cup Z_{2}$ is a
	decomposition of $Z$, where $Z_{1} = Z \cap (-qZ)$, $Z_{2} = Z\backslash Z_{1}$ and $-qZ = \{2n - qx~|~x \in Z\}$. If $\mathcal{C}$ has parameters $[n, n-|Z|, d]_{q^2}$, then there exists an EAQECC with parameters $[[n, n-2|Z|+|Z_{1}|, d; |Z_{1}|]]_q$.
\end{theorem}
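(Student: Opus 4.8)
The plan is to reduce the statement to the standard entanglement-assisted construction and then settle everything by a single explicit rank computation driven by the cyclotomic structure of the defining set. Recall that the general entanglement-assisted construction \cite{ref1} takes any classical $[n,k,d]_{q^2}$ code $\mathcal{C}$ with parity-check matrix $H$ and produces a code with parameters $[[n,\,2k-n+c,\,d;\,c]]_q$, where $c=\mathrm{rank}(HH^{\dagger})$ and $H^{\dagger}=\overline{H}^{\,T}$ is the Hermitian (conjugate) transpose. Since here $k=n-|Z|$, substituting gives the quantum dimension $2(n-|Z|)-n+c=n-2|Z|+c$, and the minimum distance $d$ is inherited directly from $\mathcal{C}$. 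Thus the whole statement collapses to proving the single identity $\mathrm{rank}(HH^{\dagger})=|Z_{1}|=|Z\cap(-qZ)|$.

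First I would fix an explicit parity-check matrix. Enumerating $Z=\{i_{1},\ldots,i_{|Z|}\}\subseteq\varOmega_{2n}$, take $H$ to be the Vandermonde-type matrix whose $s$-th row is $(1,\beta^{i_{s}},\beta^{2i_{s}},\ldots,\beta^{(n-1)i_{s}})$, with $\beta$ the primitive $2n$-th root of unity from Section 2; this is a parity-check matrix of $\mathcal{C}$ since $Hc^{T}=0$ is equivalent to $c(\beta^{i})=0$ for all $i\in Z$, i.e. to $c\in\mathcal{C}$. Using $\overline{a}=a^{q}$, the $(s,t)$ entry of $HH^{\dagger}$ becomes the geometric sum $\sum_{\ell=0}^{n-1}\beta^{\ell(i_{s}+q i_{t})}$.

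The key step is evaluating these sums. Because every element of $\varOmega_{2n}$ is odd and $q$ is odd, the exponent $i_{s}+q i_{t}$ is always even, say $2j$, so the sum collapses to $\sum_{\ell=0}^{n-1}\gamma^{\ell j}$ with $\gamma=\beta^{2}$ a primitive $n$-th root of unity. This equals $n$ when $i_{s}+q i_{t}\equiv 0\pmod{2n}$ and $0$ otherwise; since $\gcd(n,q)=1$ forces $n\neq 0$ in $\mathbb{F}_{q^{2}}$, the entry $(HH^{\dagger})_{s,t}$ is nonzero precisely when $i_{s}\equiv -q i_{t}\pmod{2n}$. I would then exploit that $x\mapsto -qx\bmod 2n$ is a bijection of $\varOmega_{2n}$: for each column index $t$ there is at most one row index $s$ making the entry nonzero, distinct nonzero columns force distinct nonzero rows, and column $t$ is nonzero if and only if $-q i_{t}\bmod 2n\in Z$, i.e. if and only if the corresponding element lies in $Z\cap(-qZ)=Z_{1}$. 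Hence the nonzero part of $HH^{\dagger}$ is a generalized permutation (monomial) submatrix supported on an index set of size $|Z_{1}|$, which gives $\mathrm{rank}(HH^{\dagger})=|Z_{1}|$.

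Finally I would assemble the pieces: with $c=|Z_{1}|$ and quantum dimension $n-2|Z|+|Z_{1}|$, the general construction yields exactly the claimed $[[n,\,n-2|Z|+|Z_{1}|,\,d;\,|Z_{1}|]]_q$. I expect the rank evaluation to be the main obstacle — specifically, verifying that the nonzero entries of $HH^{\dagger}$ occupy distinct rows \emph{and} distinct columns, so that the rank equals the number of nonzero columns rather than merely being bounded by it. The parity argument on $i_{s}+q i_{t}$ and the bijectivity of multiplication by $-q$ modulo $2n$ are precisely what make this identification clean.
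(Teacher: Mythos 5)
The paper does not prove Theorem~\ref{th:3.1} at all --- it is quoted from \cite{ref45} --- and your argument correctly reconstructs the standard proof from that line of work: reduce to the generic $[[n,2k-n+c,d;c]]_q$ construction and show $\mathrm{rank}(HH^{\dagger})=|Z_1|$ by evaluating the geometric sums $\sum_{\ell}\beta^{\ell(i_s+qi_t)}$, using that $i_s+qi_t$ is even, that the sum is $n\neq 0$ exactly when $i_s\equiv -qi_t\pmod{2n}$, and that $x\mapsto -qx$ is a bijection modulo $2n$, so the nonzero entries form a partial permutation pattern of size $|Z\cap(-qZ)|$. The only point worth making explicit is that your Vandermonde $H$ has entries in the extension field $\mathbb{F}_{q^{2m}}$ rather than $\mathbb{F}_{q^2}$; since it has the same row space over $\mathbb{F}_{q^{2m}}$ as any $\mathbb{F}_{q^2}$-parity-check matrix, one has $H'=AH$ with $A$ invertible and $H'H'^{\dagger}=A(HH^{\dagger})A^{\dagger}$, so the rank (hence $c$) is unaffected.
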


 As we all know, one of the central tasks in quantum coding theory is to construct quantum MDS codes which satisfy the quantum Singleton bound. Similar to the Singleton bound of quantum codes, there also is the following entanglement-assisted quantum Singleton bound  which established by Brun $\it et~al$ \cite{ref1}. It's worth pointing out that Grassl $\it et~al.$ \cite{ref37} gave a proof of entanglement-assisted quantum Singleton bound for arbitrary $q$. Therefore, we have the following results for any $q$.

\begin{theorem} \label{th:3.2}
	\emph{\textbf{ \cite{ref40}}} Assume that $\mathcal{C}$ is an entanglement-assisted quantum code with parameters $[[n, k, d; c]]_q$. If $d\leq \frac{n+2}{2} $, then $\mathcal{C}$ satisfies the entanglement-assisted Singleton bound \[n+c-k\geq 2(d-1).\] If $\mathcal{C}$ satisfies the equality $n+c-k= 2(d-1)$ for $d\leq \frac{n+2}{2} $, then it is called an entanglement-assisted quantum MDS code.
\end{theorem}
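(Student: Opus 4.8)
The plan is to give a self-contained information-theoretic (entropic) proof, in the spirit of Grassl, Huber and Winter, rather than appealing to the classical-code reduction of Brun $\it et~al.$ First I would purify the encoding: feed the maximally mixed state on the $k$ logical qudits into the encoder and attach a reference system $R$ of dimension $q^{k}$ so that $R$ is maximally entangled with the logical register, whence $S(R)=k$ (all entropies measured in units of $\log q$, so a system of $t$ qudits has entropy at most $t$). The $c$ pre-shared ebits contribute a bipartite system whose receiver-side halves I label $B$, with $|B|=c$; the encoder acts as an isometry on the logical register together with the sender's ebit-halves and outputs the $n$ channel qudits $A$. Since every ingredient is pure and the encoder is an isometry, the global state on $ABR$ is pure, a fact I will exploit through complementary-subsystem identities.

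Next I would translate the minimum-distance hypothesis into a decoupling condition. A code of distance $d$ corrects the erasure of any $d-1$ of the channel qudits, and the standard quantum erasure-correction criterion states that erasing a subset $E\subseteq A$ is correctable from $(A\setminus E)B$ precisely when $E$ is decoupled from the reference, i.e. $I(E:R)=0$, equivalently $S(ER)=S(E)+S(R)$. The crucial structural input is that the hypothesis $d\le\frac{n+2}{2}$ gives $2(d-1)\le n$, so I may choose two \emph{disjoint} subsets $E_{1},E_{2}\subseteq A$ with $|E_{1}|=|E_{2}|=d-1$ and set $F=A\setminus(E_{1}\cup E_{2})$, so that $|F|=n-2(d-1)$. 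This is exactly where the bound $d\le\frac{n+2}{2}$ enters: without it one cannot fit two disjoint erasure patterns inside the $n$ channel qudits. It is also the reason the receiver's share $B$ must stay intact throughout, since errors are confined to $A$ and the ordinary quantum Singleton bound therefore does not apply verbatim.

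The computation then runs as follows. From $I(E_{1}:R)=0$ together with purity of $ABR$ (the complement of $E_{1}R$ is $E_{2}FB$) I obtain $S(E_{1})+S(R)=S(E_{2}FB)$, and symmetrically $S(E_{2})+S(R)=S(E_{1}FB)$. Applying subadditivity $S(E_{2}FB)\le S(E_{2})+S(FB)$ and $S(E_{1}FB)\le S(E_{1})+S(FB)$ and adding the two resulting inequalities, the terms $S(E_{1})$ and $S(E_{2})$ cancel and I am left with $S(R)\le S(FB)$. Finally the dimension bound gives $S(FB)\le|F|+|B|=n-2(d-1)+c$, so that $k=S(R)\le n+c-2(d-1)$, which is the asserted inequality $n+c-k\ge 2(d-1)$. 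The equality case is simply the definition of an EAQMDS code, so nothing further is needed there.

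I expect the main obstacle to be conceptual rather than computational: pinning down the correct operational model of an EAQECC (which systems are noiseless, that errors act only on the $n$ channel qudits while the $c$ receiver-side ebit-halves remain available), and justifying the erasure-correction/decoupling equivalence at the level of subsystems rather than code subspaces. Once the purified picture and the decoupling condition are in place, the remaining steps are just purity identities and subadditivity. An alternative route would follow Brun $\it et~al.$ and build the EAQECC from a classical $\mathbb{F}_{q^{2}}$-linear code, deducing the bound from the classical Singleton bound together with the entanglement count; I would keep this as a cross-check but regard the entropic argument as cleaner and more general, since it holds for arbitrary (not necessarily stabilizer) codes and for all $q$.
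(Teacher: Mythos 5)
Your argument is correct, but note that the paper offers no proof of this statement at all: Theorem 3.2 is quoted verbatim from the literature (the references of Grassl and of Grassl--Huber--Winter cited alongside it), so there is no internal proof to compare against. What you have written is essentially the entropic proof from the Grassl--Huber--Winter paper that this article cites: purify with a reference $R$ so that $S(R)=k$, use the erasure-correction/decoupling criterion $I(E:R)=0$ for any $d-1$ channel qudits, exploit $d\leq\frac{n+2}{2}$ to place two disjoint erasure patterns $E_1,E_2$ inside $A$, and combine purity of $ABR$ with subadditivity and the dimension bound on $FB$ to get $k\leq n+c-2(d-1)$. The chain of identities and inequalities checks out, and you correctly identify both where the hypothesis $d\leq\frac{n+2}{2}$ is used and why the receiver's ebit halves $B$ must be treated as noiseless. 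The only caveat is that a fully self-contained write-up would still need to justify the decoupling characterization of erasure correction for the entanglement-assisted setting (errors confined to $A$, recovery acting on $(A\setminus E)B$), which you flag as the main conceptual burden; as a proof sketch of a cited theorem this is entirely adequate.
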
 

\section{Construction of EAQMDS Codes with New Parameters}
\label{sec:3} 
The main idea of this section is giving a unified form of constructing EAQECCs of length $n$. Consequently, a family of new EAQMDS codes of length $n=\frac{2(q^2+1)}{a}$ with general parameters are presented, where $q$ is an odd prime power, $a=m^2+1$ and $m$ is an odd integer.

\begin{lemma} \label{le:4.1}
	~Let $n=\frac{2(q^2+1)}{a}$ and $s=\frac{n}{2}$, where $q$ is an odd prime power, $a=m^2+1$, and $m$ is an odd integer. Then, the $q^2$-cyclotomic cosets modulo $2n$ are $C_s=\{s\}$, $C_{3s}=\{3s\}$ and $C_{s-2l}=\{s-2l,s+2l\}$ for $1\leq l\leq s-1$.
\end{lemma}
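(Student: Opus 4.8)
The plan is to reduce the entire statement to a single congruence for $q^2$ modulo $2n$ and then read off every coset from it. First I would record the elementary parity facts. Since $q$ and $m$ are both odd, $q^2\equiv 1$ and $m^2\equiv 1 \pmod 8$, so both $q^2+1$ and $a=m^2+1$ are congruent to $2$ modulo $4$; writing each as $2\times(\text{odd})$ shows that $s=\frac{n}{2}=\frac{q^2+1}{a}$ is an \emph{odd} integer and that $2n=4s$. This oddness of $s$ is exactly what makes the later reductions collapse, so I would isolate it at the outset.

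The key step is to show that multiplication by $q^2$ acts on the odd residues modulo $2n=4s$ by the simple rule $i\,q^2\equiv 2s-i \pmod{4s}$ for every odd $i$. To see this I start from $q^2+1=as=(m^2+1)s$, that is $q^2=(m^2+1)s-1$, and reduce modulo $4s$: since $m^2+1\equiv 2 \pmod 4$ we get $q^2\equiv 2s-1 \pmod{4s}$. Then for odd $i$ one has $i\,q^2\equiv i(2s-1)=2si-i \pmod{4s}$, and because $i$ is odd $2si\equiv 2s \pmod{4s}$, which yields $i\,q^2\equiv 2s-i$. This also confirms that $q^2$ preserves parity modulo $2n$, so it suffices to describe the cosets of the odd residues, which are the only ones relevant to a negacyclic code (whose defining roots $\beta^{1+2j}$ carry odd exponents), i.e. the cosets covering $\varOmega_{2n}$.

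With this rule the structure is immediate. The map $i\mapsto 2s-i$ is an involution on the odd residues mod $4s$ (applying it twice returns $i$), so every $q^2$-cyclotomic coset of an odd element has size $1$ or $2$. Such a coset is a singleton exactly when $i\equiv 2s-i$, i.e. $2i\equiv 2s \pmod{4s}$, i.e. $i\equiv s \pmod{2s}$; among the odd residues in $[0,4s)$ this forces $i=s$ or $i=3s$, giving $C_s=\{s\}$ and $C_{3s}=\{3s\}$. For the remaining cosets I would parametrize the other odd residues as $s-2l$ with $1\le l\le s-1$; then $(s-2l)q^2\equiv 2s-(s-2l)=s+2l$ and likewise $(s+2l)q^2\equiv s-2l$, so $C_{s-2l}=\{s-2l,\,s+2l\}$, and these are genuinely two-element since $4l\not\equiv 0 \pmod{4s}$ for $1\le l\le s-1$.

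Finally I would confirm completeness by a counting check: $\varOmega_{2n}$ has $n=2s$ odd residues, and the listed cosets account for $1+1+2(s-1)=2s$ of them, so they form a full partition and nothing is missed. I do not expect a genuine obstacle here, since the argument is elementary once the congruence $q^2\equiv 2s-1 \pmod{4s}$ is in hand; the only place requiring real care is the parity bookkeeping that makes $s$ odd and licenses $2si\equiv 2s \pmod{4s}$, as the entire collapse of the cosets rests on it.
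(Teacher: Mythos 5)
Your proof is correct. The paper itself omits the proof of this lemma, stating only that it is a ``parallel promotion'' of Lemma 4.1 in the cited reference of Kai and Zhu, so there is no in-paper argument to compare against; your route --- establishing that $s$ is odd from $q^2+1\equiv a\equiv 2\pmod 4$, deducing $q^2\equiv 2s-1\pmod{4s}$ from $q^2+1=as$, and reading off that multiplication by $q^2$ acts as the involution $i\mapsto 2s-i$ on odd residues --- is exactly the standard argument used for such lemmas, and your concluding count $1+1+2(s-1)=2s=|\varOmega_{2n}|$ confirms completeness. No gaps.
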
 

    Note that Lemma \ref{le:4.1} is a parallel promotion of Lemma 4.1 in \cite{ref44}, so we omit its proof here for simplification.\\
    
    Let $n=\frac{2(q^2+1)}{a}$, where $q$ is an odd prime power, $a=m^2+1$ and $2\mid a$, then $q\equiv \pm m $ mod $a$ and $m$ is an odd integer. Next, we will discuss all cases according to the different values of $q$ and $m$.
    
    In order to prove the later theorem more easily, we give another expression of the $q^2$-cyclotomic cosets modulo $2n$ containing integers in $\mathbb{Z}_{2n}$ based on the results of Lemma \ref{le:4.1}. We use $i$, $j$ and $q$ to replace the position of $l$. Taking into account all possible situations, we have given the value range of the corresponding parameters. The obtained results are listed below:
    $$C_{s+2(iq+j)}=\{s+2(iq+j), s-2(iq+j)\},$$

    (1) $q=a\xi+a-m$, the range of $i$, $j$ is shown in Table 2.\\
    \begin{table}[h]
    \setlength{\abovecaptionskip}{0.cm}
    \setlength{\belowcaptionskip}{0.2cm}
    \small
    \centering
    \caption{\newline  Parameters of $i$ and $j$.}
    \begin{tabular}{c|c}
    	\hline
    	$q=a\xi+a-m$ & $\xi$ is odd 	\\
    	\hline
    	                      & $q-2f(m\xi+m-1) \leq j\leq q-(\xi+1)-(2f-1)(m\xi+m-1)$,\\
    	      $m\equiv 1$ mod $4$  & $0\leq i \leq \xi $, $1\leq f \leq\lfloor \frac{m}{2} \rfloor$\\
    	\cline{2-2}
    	                     & $q-(m\xi +m-1) \leq j\leq q$, $0\leq i \leq \xi -1  $\\
    	                     
    	\hline
    	 & $1 \leq j\leq \lfloor \frac{m}{2} \rfloor\xi+\lfloor \frac{m}{2} \rfloor-1,$ \\	
    	& $\frac{m+1}{2}(\xi +1)+(2f-1)(m\xi+m-1)\leq j\leq \frac{m-1}{2}(\xi +1)-1+2f(m\xi +m-1),$ \\
    	 & $\frac{m+1}{2}(\xi+1)-\frac{m-3}{2}(m\xi+m-1) \leq j\leq \frac{a}{2}(\xi+1)-\frac{m+1}{2},$\\
    	$m\equiv 3$ mod $4$ & $q-2f(m\xi+m-1)-\frac{m-1}{2}(\xi +1)+1\leq j \leq q-(2f-1)(m\xi+m-1)-\frac{m+1}{2}(\xi+1),$\\
    	& $0\leq i \leq \xi $, $1\leq f \leq  \lfloor \frac{m}{4} \rfloor$\\
    	\cline{2-2}
    	& $q-\lfloor \frac{m}{2} \rfloor(\xi+1)+1\leq j\leq q$, $0\leq i\leq \xi-1 $\\
    	\hline
    	$q=a\xi+a-m$ & $\xi$ is even 	\\
    	\hline
    	& $1 \leq j\leq \lfloor \frac{m}{2} \rfloor\xi+\lfloor \frac{m}{2} \rfloor-1,$ \\	
    	& $\frac{m+1}{2}(\xi +1)+(2f-1)(m\xi+m-1)\leq j\leq \frac{m-1}{2}(\xi +1)-1+2f(m\xi +m-1),$ \\
    	& $ \frac{a}{2}\xi-\frac{m-1}{2}\leq j\leq (\frac{a}{2}+1)(\xi+1)-\frac{m-1}{2}-1,$\\
    	$m\equiv 1$ mod $4$ & $q-2f(m\xi+m-1)-\frac{m-1}{2}(\xi +1)+1\leq j \leq q-(2f-1)(m\xi+m-1)-\frac{m+1}{2}(\xi+1),$\\
    	& $0\leq i \leq \xi $, $1\leq f \leq  \lfloor \frac{m}{4} \rfloor$\\
    	\cline{2-2}
    	& $q-\lfloor \frac{m}{2} \rfloor(\xi+1)+1\leq j\leq q$, $0\leq i\leq \xi-1 $\\
    	\hline
    	& $q-2f(m\xi+m-1) \leq j\leq q-(\xi+1)-(2f-1)(m\xi+m-1)$,\\
    	$m\equiv 3$ mod $4$  & $0\leq i \leq \xi $, $1\leq f \leq\lfloor \frac{m}{2} \rfloor$\\
    	\cline{2-2}
    	& $q-(m\xi +m-1) \leq j\leq q$, $0\leq i \leq \xi -1  $\\
    	\hline
    \end{tabular}
\end{table}

   (2) $q=a\xi+m$, the range of $i$, $j$ is shown in Table 3.\\
    \begin{table}[h]
   	\setlength{\abovecaptionskip}{0.cm}
   	\setlength{\belowcaptionskip}{0.2cm}
   	\small
   	\centering
   	\caption{\newline   Parameters of $i$ and $j$.}
   	\begin{tabular}{c|c}
   		\hline
   		$q=a\xi+m$ & $\xi$ is odd 	\\
   		\hline
   		& $1 \leq j\leq m\xi $, $0\leq i \leq \xi $\\
   		\cline{2-2}
   		$m\equiv 1$ mod $4$  & $k+1+(2f-1)(m\xi+1)\leq j\leq mk+(2f-1)(m\xi+1),$\\
   		  & $1\leq f \leq  \lfloor \frac{m}{2} \rfloor$,$0\leq i\leq \xi-1 $\\
   		
   		\hline
   		& $0\leq i \leq \xi $, $1 \leq j\leq \lfloor \frac{m}{2} \rfloor\xi $ \\
   		\cline{2-2}
   		& $\frac{m+1}{2}\xi +1+(2f-1)(m\xi+1)\leq j\leq \frac{m-1}{2}\xi +2f(m\xi+1),$ \\	
   		& $ \frac{a}{2}\xi+\frac{m+1}{2}\leq j\leq (\frac{a}{2}+1)\xi+\frac{m-1}{2},$ \\
   		$m\equiv 3$ mod $4$ & $ q-2f(m\xi+1)-\frac{m-3}{2}\xi +1\leq j \leq q-(2f-1)(m\xi+1)-\frac{m+1}{2}\xi -1,$\\
   		 & $ q-\lfloor \frac{m}{2} \rfloor\xi\leq j\leq q.$\\
   		& $0\leq i \leq \xi-1 $, $1\leq f \leq  \lfloor \frac{m}{4} \rfloor$\\
   		\hline
   		$q=a\xi+m$ & $\xi$ is even 	\\
   		\hline
   		& $1 \leq j\leq \lfloor \frac{m}{2} \rfloor\xi $, $0\leq i\leq \xi $\\
   		\cline{2-2}
   		
   		& $\frac{m+1}{2}\xi +1+(2f-1)(m\xi+1)\leq j\leq \frac{m-1}{2}\xi +2f(m\xi+1),$ \\	
   		$m\equiv 1$ mod $4$ & $\frac{a}{2}\xi+\frac{m+1}{2}\leq j\leq (\frac{a}{2}+m-1)\xi+\frac{m+1}{2},$ \\
   		& $q-2f(m\xi+1)-\frac{m-3}{2}\xi +1\leq j \leq q-(2f-1)(m\xi+1)-\frac{m+1}{2}\xi -1,$\\
   		& $q-\lfloor \frac{m}{2} \rfloor\xi\leq j\leq q$\\
   		& $0\leq i \leq \xi-1 $, $1\leq f \leq  \lfloor \frac{m}{4} \rfloor$\\
   		\hline
   		& $1 \leq j\leq m\xi $, $0\leq i \leq \xi $\\
   		\cline{2-2}
   		$m\equiv 3$ mod $4$  & $k+1+(2f-1)(m\xi+1)\leq j\leq mk+(2f-1)(m\xi+1),$\\
   		& $1\leq f \leq  \lfloor \frac{m}{2} \rfloor$,$0\leq i\leq \xi-1 $\\
   		\hline
   	\end{tabular}
   \end{table} 

 Note that the subscript of $C_{s+2(iq+j)}$ are all belong to $[s,3s]$.
\begin{lemma}\label{le:4.2}
	~Let $n=\frac{2(q^2+1)}{a}$ and $s=\frac{n}{2}$, where $q$ is an odd prime power, $a=m^2+1$ and $m$ is an odd integer.\\
	(1) $q=a\xi+a-m$, \\
	If $m\equiv 1~{\rm mod}~4$ and $\xi$ is an odd (even) positive integer, or $m\equiv 3~{\rm mod}~4$ and $\xi$ is an even (odd) positive integer,
	then $$-qC_{s+2(iq+j)}=C_{{s+2(jq-i)}} (-qC_{s+2(iq+j)}=C_{{3s+2(jq-i)}}).$$ 
	(2) $q=a\xi+m$, \\
	If $m\equiv 1~{\rm mod}~4$ and $\xi$ is an odd (even) positive integer, or $m\equiv 3~{\rm mod}~4$ and $\xi$ is an even (odd) positive integer,
	then
	$$-qC_{s+2(iq+j)}=C_{{s+2(jq-i)}} (-qC_{s+2(iq+j)}=C_{{3s+2(jq-i)}}).$$ 
\end{lemma}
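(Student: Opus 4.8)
The plan is to reduce the entire statement to two elementary facts: a single congruence for $q^2$ modulo $2s$, and the residue of $q$ modulo $4$. First I would record the arithmetic of the parameters. Since $m$ is odd, $m^2\equiv 1\pmod 8$, so $a=m^2+1\equiv 2\pmod 8$; in particular $a=2a'$ with $a'=\frac{m^2+1}{2}$ odd, and $a\equiv 2\pmod 4$. From $as=q^2+1$ (i.e. $s=\frac{q^2+1}{a}$) together with $a'$ and $\frac{q^2+1}{2}$ both odd one gets $s$ odd, so that $s$, $3s$ and all $s\pm 2l$ are odd and hence genuine elements of $\varOmega_{2n}$. Reducing $as=q^2+1$ modulo $2s$ and using $as=2a's\equiv 0\pmod{2s}$ yields the key identity
\[ q^2\equiv -1 \pmod{2s}. \]

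Next I would carry out the core computation, which is uniform across all branches. By definition $-qC_{s+2(iq+j)}=\{\,-q(s+2(iq+j)),\,-q(s-2(iq+j))\,\}\pmod{2n}$, where $2n=4s$. The value of $2q(iq+j)\bmod 4s$ is controlled by $q(iq+j)\bmod 2s$, and the identity above gives $q(iq+j)=iq^2+jq\equiv jq-i\pmod{2s}$. For the centre, since $s$ is an integer we have $qs\equiv (q\bmod 4)\,s\pmod{4s}$, hence $-qs\equiv 3s\pmod{2n}$ when $q\equiv 1\pmod 4$ and $-qs\equiv s\pmod{2n}$ when $q\equiv 3\pmod 4$. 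Substituting, the image coset is $\{\,s'+2(jq-i),\,s'-2(jq-i)\,\}$ with $s'=s$ or $3s$; that is,
\[ -qC_{s+2(iq+j)}=C_{s+2(jq-i)}\ (q\equiv 3\bmod 4),\qquad -qC_{s+2(iq+j)}=C_{3s+2(jq-i)}\ (q\equiv 1\bmod 4). \]

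The remaining step is to match this dichotomy to the eight branches of the statement by computing $q\bmod 4$. Using $a\equiv 2\pmod 4$, in case (1) $q=a(\xi+1)-m\equiv 2(\xi+1)-m\pmod 4$ and in case (2) $q=a\xi+m\equiv 2\xi+m\pmod 4$. Splitting on $m\equiv 1$ or $3\pmod 4$ and on the parity of $\xi$ then pins down $q\bmod 4$ in each branch; for example, in case (1) with $m\equiv 1\pmod 4$ one finds $q\equiv 2\xi+1\pmod 4$, which is $3$ for $\xi$ odd (giving the $C_{s+\cdots}$ form) and $1$ for $\xi$ even (giving the $C_{3s+\cdots}$ form), exactly as asserted; the other three configurations are verified in the same way.

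I expect the main obstacle to be not the algebra but the bookkeeping. One must check that, for the $(i,j)$-ranges tabulated in Tables~2 and~3, the output indices $s+2(jq-i)$ and $3s+2(jq-i)$ reduce to legitimate representatives in $\varOmega_{2n}$ (equivalently, that $jq-i$ stays in the admissible window modulo $2s$), so that the correspondence is a genuine bijection between the two halves of the defining set needed for the decomposition in Theorem~\ref{th:3.1}. Since the underlying identity $-qC_{s+2(iq+j)}=C_{s'+2(jq-i)}$ is uniform, the real content lies in this parity-and-range verification across the sub-cases rather than in any single hard estimate.
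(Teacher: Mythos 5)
Your proposal is correct and takes essentially the same route as the paper: both reduce $-q\bigl(s\pm 2(iq+j)\bigr)$ modulo $2n=4s$ using $q^2+1\equiv 0$ (so that $2iq^2\equiv -2i$), and both decide between the centres $s$ and $3s$ by determining the residue of $q$ modulo $4$ in each of the parity branches for $m$ and $\xi$. Your formulation via $q^2\equiv -1\pmod{2s}$ and $-qs\equiv(q\bmod 4)(-s)\pmod{4s}$ is just a tidier packaging of the paper's computation $-qs=-(q\pm 1\mp 1)s$ with $4\mid q\pm 1$.
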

\begin{proof} 
 Since the proof of two cases is similar, then we only prove the case (1). Let $q=a\xi+a-m$, then we can obtain the following results.
	
	(a) Note that we have $C_{s+2(iq+j)}=\{s+2(iq+j), s-2(iq+j)\}$, for the corresponding range of $i$ and $j$ mentioned earlier. 
	
    To prove $-qC_{s+2(iq+j)}=C_{{s+2(jq-i)}}$, we first give the following calculus.
	\begin{equation*}
		\begin{split}
			-q\cdot(s-2(iq+j))&=-qs+2iq^2+2jq\\
			&=-(q+1-1)\cdot s+2i\cdot(q^2+1-1)+2jq\\
			&\equiv-\frac{q+1}{4}\cdot2n+{s+2(jq-i)}~{\rm mod}~2n.
		\end{split}
	\end{equation*}	
    If $m\equiv 1$ mod $4$ and $\xi$ is an odd positive integer, then let $m=4h+1$. Therefore,
    \begin{equation*}
    	\begin{split}
    		q+1&=a\xi+a-m+1\\
    		&\equiv 2(\xi+1)~{\rm mod}~4\\
    		&\equiv 0~{\rm mod}~4.\\
    	\end{split}
    \end{equation*}	\\
    If $m\equiv 3$ mod $4$ and $\xi$ is an even positive integer, then let $m=4h+3$. Therefore,
    \begin{equation*}
    	\begin{split}
    		q+1&=a\xi+a-m+1\\
    		&\equiv 10\xi+8~{\rm mod}~4\\
    		&\equiv 0~{\rm mod}~4.\\
    	\end{split}
    \end{equation*}	
	The desired result $-qC_{s+2(iq+j)}=C_{{s+2(jq-i)}}$ follows.\\
	(b) Similarly, we have $C_{s+2(iq+j)}=\{s+2(iq+j), s-2(iq+j)\}$, for the corresponding range of $i$ and $j$ mentioned earlier. 
	To prove $-qC_{s+2(iq+j)}=C_{{3s+2(jq-i)}}$, we first give the following calculus.
	\begin{equation*}
		\begin{split}
			-q\cdot(s-2(iq+j))&=-qs+2iq^2+2jq\\
			&=-(q+1-1)\cdot s+2i\cdot(q^2+1-1)+2qj\\
			&\equiv{-\frac{q-1}{4}\cdot 2n+3s+2(jq-i)}~{\rm mod}~2n.\\
		\end{split}
	\end{equation*}	
	If $m\equiv 1$ mod $4$ and $\xi$ is an even positive integer, then let $m=4h+1$. Therefore,
	\begin{equation*}
	\begin{split}
	q-1&=a\xi+a-m+1\\
	&\equiv 2\xi~{\rm mod}~4\\
	&\equiv 0~{\rm mod}~4.\\
	\end{split}
	\end{equation*}	\\
	If $m\equiv 3$ mod $4$ and $\xi$ is an odd positive integer, then let $m=4h+3$. Therefore,
	\begin{equation*}
	\begin{split}
	q-1&=a\xi+a-m+1\\
	&\equiv 10\xi+6~{\rm mod}~4\\
	&\equiv 0~{\rm mod}~4.\\
	\end{split}
	\end{equation*}	
	The desired result $-qC_{s+2(iq+j)}=C_{{3s+2(jq-i)}}$ follows.
	
\end{proof}
\



\noindent\textbf{Case \uppercase\expandafter{\romannumeral 1} \ \ $q=a\xi+a-m$}\\

As we all know, one needs to calculate the number of entangled states $c$ when they try to construct EAQECCs. Thus, we give the following preparation lemma to determine its parameters.

\begin{lemma}\label{le:4.3}
 	Let $n=\frac{2(q^2+1)}{a}$ and $s=\frac{n}{2}$, where $q=a\xi+a-m$ is an odd prime power, $a=m^2+1$ and $m$ is an odd integer. \\
    (1) If $m\equiv 1~{\rm mod}~4$ and $\xi$ is an odd positive integer, or $m\equiv 3~{\rm mod}~4$ and $\xi$ is an even positive integer. For a positive integer $\alpha$ with $1\leq \alpha\leq \xi $, let
    \iffalse
 	\begin{equation*}
 	\begin{split}
 	T_{1}~~&=\bigcup_{\substack{0 \leq l \leq m-1}}\bigcup_{\substack{2t_l\xi+h_l+ \alpha\leq j\leq 2(t_l+1)\xi+h_l-1 -\alpha,\\a_{1l}\lceil \frac{m}{2} \rceil+b_{1l}\lfloor \frac{m}{2} \rfloor\leq t_l \leq a_{2l}\lceil \frac{m}{2} \rceil+b_{2l}\lfloor \frac{m}{2} \rfloor,~h_l=2t_l-l+1,\\if~  j\leq (a-m)\xi+(a-2m),~0 \leq i\leq\alpha,~else,~0 \leq i\leq\alpha-1}}C_{s+2(iq+j)}\\
 	\end{split}
 	\end{equation*}
    \fi
    \begin{equation*}
    	\begin{split}
    		T_{1}~~&=\bigcup_{\substack{2t_l\xi+h_l+ \alpha\leq j\leq 2(t_l+1)\xi+h_l-1 -\alpha,\\if~  j\leq (a-m)\xi+(a-2m),~0 \leq i\leq\alpha,~else,~0 \leq i\leq\alpha-1}}C_{s+2(iq+j)}\\
    	\end{split}
    \end{equation*}
	Then $T_1\cap-qT_1=\emptyset.$ where $0 \leq l \leq m-1$, $a_{1l}\lceil \frac{m}{2} \rceil+b_{1l}\lfloor \frac{m}{2} \rfloor\leq t_l \leq a_{2l}\lceil \frac{m}{2} \rceil+b_{2l}\lfloor \frac{m}{2} \rfloor,~h_l=2t_l-l+1$, $a_{10}=a_{20}=b_{10}=0$, $b_{20}=1$, and  $a_{1l}=a_{1l-1}+l ~{\rm mod} ~2 ,b_{1l}=b_{1l-1}+(l+1) ~{\rm mod} ~2,a_{2l}=a_{2l-1}+(l+1) ~{\rm mod} ~2,b_{2l}=b_{2l-1}+l ~{\rm mod} ~2$.
 	\noindent(2) If $m\equiv 1~{\rm mod}~4$ and $\xi$ is an even positive integer, or $m\equiv 3~{\rm mod}~4$ and $\xi$ is an odd positive integer. Let \\
 	\begin{equation*}
 		\begin{split}
 		T_{1}~~= ~~~&\bigcup_{\substack{1 \leq j \leq \xi-\alpha, \\0 \leq i\leq\alpha}}C_{s+2(iq+j)}
 		\cup\bigcup_{\substack{m^2\xi +m^2-m+2+\alpha \leq j \leq      q,\\0 \leq i\leq\alpha-1}}C_{s+2(iq+j)}\\
 		&\cup\bigcup_{\substack{0 \leq l \leq m-1}}~~\cup\bigcup_{\substack{(2t_l-1)\xi+h_l+ \alpha\leq j\leq (2t_l+1)\xi+h_l-1 -\alpha,\\if~  j\leq (a-m)\xi+(a-2m),~0 \leq i\leq\alpha,~else,~0 \leq i\leq\alpha-1}}C_{s+2(iq+j)}\\
 		\end{split}
 	\end{equation*}
 	Then $T_1\cap-qT_1=\emptyset.$ where $0 \leq l \leq m-1$, $a_{1l}\lceil \frac{m}{2} \rceil+b_{1l}\lfloor \frac{m}{2} \rfloor\leq t_l \leq a_{2l}\lceil \frac{m}{2} \rceil+b_{2l}\lfloor \frac{m}{2} \rfloor,~h_l=2t_l-l$, $a_{10}=a_{20}=b_{10}=0$, $b_{20}=1$, and  $a_{1l}=a_{1l-1} ,b_{1l}=b_{1l-1},a_{2l}=a_{2l-1}+l ~{\rm mod} ~2 ,b_{2l}=b_{2l-1}+(l+1) ~{\rm mod} ~2$.

\end{lemma}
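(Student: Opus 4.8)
The plan is to translate the set-theoretic claim $T_1\cap-qT_1=\emptyset$ into an arithmetic non-collision statement about the index pairs $(i,j)$, exploiting the conjugation formula already proved in Lemma~\ref{le:4.2}. Under the parity hypotheses of part (1), Lemma~\ref{le:4.2} gives $-qC_{s+2(iq+j)}=C_{s+2(jq-i)}$, so $-qT_1$ is the union of the cosets $C_{s+2(jq-i)}$ taken over the very index set $I$ that defines $T_1$. The crucial simplification is that, at the level of coset values, $-q$ acts by multiplication by $q$ modulo $n$: since $na=2(q^2+1)$ and $a$ is even, $q^2\equiv-1\pmod n$, and therefore $q(iq+j)=iq^2+jq\equiv jq-i\pmod n$. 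Because each coset in Lemma~\ref{le:4.1} is symmetric, $C_{s+2v}=C_{s+2w}$ holds iff $v\equiv\pm w\pmod n$. Hence it suffices to prove that
$$q(iq+j)\not\equiv\pm(i'q+j')\pmod n$$
for every pair $(i,j),(i',j')\in I$; this is exactly the assertion that the dilation-by-$q$ image of the value set of $T_1$ is disjoint from that value set and its negative.

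First I would fix, for each band, the base-$q$ description of its values: a point $(i,j)\in I$ carries value $iq+j$ with the ``high digit'' $i$ confined to $\{0,\dots,\alpha\}$ (or $\{0,\dots,\alpha-1\}$ in the high-$j$ ``else'' branch) and the ``low digit'' $j$ running through the explicit window $2t_l\xi+h_l+\alpha\le j\le 2(t_l+1)\xi+h_l-1-\alpha$. Multiplying such a value by $q$ and reducing modulo $n=2s$ is where the combinatorics lives: since $jq$ can be as large as $q^2\approx as$, the reduction wraps around several times, and I expect the number of wraps to be governed precisely by the index $l$ together with $t_l$ and the shift $h_l=2t_l-l+1$. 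I would therefore compute, band by band, the residue class of $jq-i$ modulo $n$ and read off which window (if any) of $I$ it could possibly fall into. The decisive point is that the image always lands outside $I$: the $+\alpha$ and $-\alpha$ buffers cut out of each window are sized exactly so that the dilated bands slot into the gaps between consecutive windows of $I$ (or outside the admissible range of $I$ altogether), never into a window itself.

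The key steps, in order, are therefore: (i) invoke Lemma~\ref{le:4.2} to write $-qT_1$ explicitly; (ii) reduce disjointness to the non-congruence $q(iq+j)\not\equiv\pm(i'q+j')\pmod n$ using coset symmetry and $q^2\equiv-1$; (iii) for each admissible $l$ and each $t_l$ in the range $a_{1l}\lceil\frac m2\rceil+b_{1l}\lfloor\frac m2\rfloor\le t_l\le a_{2l}\lceil\frac m2\rceil+b_{2l}\lfloor\frac m2\rfloor$ dictated by the recursion, reduce $q(iq+j)\bmod n$ to canonical index form and locate it relative to the windows of $I$; (iv) verify in the two sign cases that the located point lies strictly in a between-window gap or outside the admissible range, using the width-$\alpha$ buffers. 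Part (2) is then handled by the identical scheme after replacing $h_l=2t_l-l+1$ by $h_l=2t_l-l$, folding in the two additional explicit unions, and using the part-(2) recursion for $a_{1l},b_{1l},a_{2l},b_{2l}$.

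The step I expect to be the genuine obstacle is (iii)--(iv): the bookkeeping over $0\le l\le m-1$, the recursively defined bounds $a_{1l},b_{1l},a_{2l},b_{2l}$, and the branch cutoff at $j=(a-m)\xi+(a-2m)$ together generate a mosaic of windows whose images under dilation by $q$ must each be tracked modulo $n$ and certified to miss every window. One must show simultaneously that no image coincides with a genuine element of $I$ (the $+$ sign), that no image coincides with the negative of such an element (the $-$ sign), and that $T_1$ contains no $-q$-fixed coset; the buffers must be shown to be neither too narrow, which would permit a boundary collision, nor redundant. Everything outside this matching is routine substitution driven by $q^2\equiv-1\pmod n$ and the evenness of $a$.
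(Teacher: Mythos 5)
Your outline coincides with the paper's own strategy: invoke Lemma~\ref{le:4.2} to rewrite $-qT_1$ as the union of the cosets $C_{s+2(jq-i)}$ over the same index set, then show that no subscript arising in $-qT_1$ can name a coset of $T_1$. Your step (ii) --- reducing disjointness to the non-congruence $q(iq+j)\not\equiv\pm(i'q+j')\pmod n$ via the symmetry $C_{s+2v}=C_{s+2w}\Leftrightarrow v\equiv\pm w\pmod n$ together with $q^2\equiv-1\pmod n$ (which holds because $an=2(q^2+1)$ and $a$ is even) --- is correct, and is in fact a more careful formalization than the appendix gives, since the paper compares the raw quantities $s+2(iq+j)$ and $s+2(jq-i)$ without explicitly carrying out the reduction modulo $2n$ that becomes unavoidable once $jq-i$ exceeds $n$.

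The gap is that your steps (iii)--(iv), which you yourself flag as ``the genuine obstacle,'' are where the entire content of the lemma lives, and you do not carry them out: the claim that each dilated window ``slots into the gaps'' is asserted as the expected outcome rather than derived, and no residue interval is ever computed. For comparison, the paper's appendix does commit to a concrete (if terse) argument, and it is \emph{not} a window-by-window gap-matching of the kind you propose; it is a single global separation. Band by band it bounds $iq+j\le \alpha q+(\text{window maximum})$, where every window maximum is of size $O(m\xi)$ and hence well below $q$, and bounds $jq-i\ge(\text{window minimum})\,q-\alpha\ge(1+\alpha)q-\alpha$, since every $j$-window begins at $1+\alpha$ or later; consequently every subscript occurring in $T_1$ is strictly smaller than every subscript occurring in $-qT_1$, and the closing remark about largest elements of cosets is meant to dispose of the wraparound and the $\pm$ ambiguity via canonical representatives in $[s,3s]$. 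If you wish to pursue your finer plan instead, you must actually exhibit, for each $l$, each $t_l$ in the recursively defined range, and each sign, the explicit residue interval of $q(iq+j)\bmod n$ and check it against every window of the index set; until that table of verifications is written down (or replaced by the coarser two-sided bound the paper uses), the proof is a program, not a proof.
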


The proof of Lemma 4.3 is very complicated and is placed in Appendix.  \\

\noindent\textbf{Example 1} Let $m=3$, then $a=10$, $q=10\xi+7$, $n=\frac{q^2+1}{5}$ and $s=\frac{n}{2}$. For a positive integer $\alpha$ with $1\leq \alpha\leq \xi $. Let $\xi$ is an even positive integer. According Lemma 4.3 (1), we have $0 \leq l \leq 2$. Below we use the following Table $1$ to give the range of each parameter to get $T_1$. We have
    \begin{equation*}
    \begin{split}
    T_{1}~=&~\bigcup_{\substack{1+\alpha \leq j \leq 2\xi-\alpha, \\0 \leq i\leq\alpha}}C_{s+2(iq+j)}
    \cup\bigcup_{\substack{2\xi+3+\alpha \leq j \leq 4\xi+2-\alpha, \\0 \leq i\leq\alpha}}C_{s+2(iq+j)}
    \cup\bigcup_{\substack{4\xi+4+\alpha \leq j \leq 6\xi+3-\alpha, \\0 \leq i\leq\alpha}}C_{s+2(iq+j)}\\
    &\cup\bigcup_{\substack{6\xi+5+\alpha \leq j \leq 7\xi+4, \\0 \leq i\leq\alpha}}C_{s+2(iq+j)}
    \cup\bigcup_{\substack{7\xi+5 \leq j \leq 8\xi+4-\alpha, \\0 \leq i\leq\alpha-1}}C_{s+2(iq+j)}
    ~~\cup\bigcup_{\substack{8\xi+7 \leq j \leq 10\xi+6-\alpha, \\0 \leq i\leq\alpha-1}}C_{s+2(iq+j)}\\
    \end{split}
    \end{equation*}
  \\
 Then it is easy to check that $T_1\cap-qT_1=\emptyset$.
 
		 
		

\begin{theorem} \label{th:4.4}
	Let $n=\frac{2(q^2+1)}{a}$ and $s=\frac{n}{2}$, where $q=a\xi+a-m$ is an odd prime power, $a=m^2+1$ and $m$ is an odd integer. For a positive integer $\alpha$ with $1\leq \alpha\leq \xi $, let $\mathcal{C}$ be a negacyclic code with defining set $Z$ given as follows $$Z=C_s\cup C_{s+2}\cup\dots\cup C_{s+2[\alpha q+(a-m)\xi +a-2m]}.$$
	If $m\equiv 1~{\rm mod}~4$ and $\xi$ is an odd (even) positive integer, or $m\equiv 3~{\rm mod}~4$ and $\xi$ is an even (odd) positive integer, 
	then $|Z_{1}|=2\alpha[(a\alpha +2(a-m)]+2a-4m+1$ ($|Z_{1}|=2\alpha[(a\alpha +2(a-m)]+2a-4m$).\\
\end{theorem}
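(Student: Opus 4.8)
The plan is to apply the decomposition of Theorem \ref{th:3.1}, writing $Z = Z_1 \cup Z_2$ with $Z_1 = Z \cap (-qZ)$ and $Z_2 = Z \setminus Z_1$, and to recover $|Z_1|$ as $|Z|-|Z_2|$. The basic structural fact I would use is that $x \mapsto -qx \bmod 2n$ induces an involution on the set of $q^2$-cyclotomic cosets, since $(-q)^2 = q^2$ fixes every coset setwise. Consequently a coset $C \subseteq Z$ lies in $Z_1$ exactly when $-qC \subseteq Z$ and in $Z_2$ exactly when $-qC \not\subseteq Z$; in other words, $Z_2$ is precisely the collection of cosets of $Z$ whose $-q$-image leaves $Z$.

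First I would compute $|Z|$. By Lemma \ref{le:4.1} the coset $C_s$ is a singleton, while each $C_{s+2l}$ with $1 \le l \le N$, where $N = \alpha q + (a-m)\xi + a - 2m$, has two elements; and for $1 \le \alpha \le \xi$ one checks directly that $N < s$ (using $a-2m=(m-1)^2$ and $q=a\xi+a-m$), so these cosets are pairwise distinct. Hence $|Z| = 2N+1 = 2[\alpha q + (a-m)\xi + a - 2m] + 1$.

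Next I would identify $Z_2$ with the set $T_1$ of Lemma \ref{le:4.3}. Writing each coset of $Z$ as $C_{s+2(iq+j)}$, Lemma \ref{le:4.2} gives $-qC_{s+2(iq+j)} = C_{s+2(jq-i)}$ in this subcase, so $C_{s+2(iq+j)} \in Z_2$ precisely when $jq-i$, reduced to its canonical representative in $[0,s-1]$, falls outside $[0,N]$. The $(i,j)$-ranges defining $T_1$ in Lemma \ref{le:4.3} are engineered to single out exactly these cosets, and the relation $T_1 \cap (-qT_1) = \emptyset$ proved there, together with the fact that the complementary cosets $Z \setminus T_1$ are carried back into $Z$ by $-q$, yields $Z_2 = T_1$ and hence $Z_1 = Z \setminus T_1$. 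Counting the paired cosets appearing in the explicit union of Lemma \ref{le:4.3} and doubling gives $|Z_2| = 2(\xi-\alpha)(a\alpha + a - m)$, whence $|Z_1| = 2N+1 - 2(\xi-\alpha)(a\alpha+a-m) = 2\alpha[a\alpha + 2(a-m)] + 2a - 4m + 1$, which is the first subcase.

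The two subcases are separated only by the behaviour of the singleton $C_s$ (corresponding to $i=j=0$). As in the proof of Lemma \ref{le:4.2}, the subcase $m \equiv 1 \pmod 4$ with $\xi$ odd (or $m \equiv 3 \pmod 4$ with $\xi$ even) is exactly the one with $q+1 \equiv 0 \pmod 4$, so $-qC_s = C_s$ and $C_s \in Z_1$, producing the trailing $+1$; in the complementary subcase $q-1 \equiv 0 \pmod 4$ forces $-qC_s = C_{3s} \not\subseteq Z$, so $C_s \in Z_2$ and the count loses one, giving $2a-4m$ in place of $2a-4m+1$. The main obstacle is the identification $Z_2 = T_1$: one must verify that the intricate $(i,j)$-ranges of Lemma \ref{le:4.3} capture precisely the cosets whose image $C_{s+2(jq-i)}$ reduces outside $[0,N]$, i.e. that the complement $Z\setminus T_1$ is $-q$-invariant. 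This is the delicate modular bookkeeping deferred to the appendix; once it is granted, the theorem reduces to the routine cardinality arithmetic displayed above.
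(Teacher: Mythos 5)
Your proposal follows essentially the same route as the paper: both rest on splitting $Z$ into the set $T_1$ of Lemma \ref{le:4.3} and its complement $T_1'=Z\setminus T_1$, and on the two facts $T_1\cap(-qT_1)=\emptyset$ and $-qT_1'=T_1'$, which together force $Z_1=T_1'$. The only difference is bookkeeping: you recover $|Z_1|$ as $|Z|-|T_1|$ (with the separate treatment of the singleton $C_s$ in the two subcases), whereas the paper writes out $T_1'$ explicitly and counts it directly; the resulting arithmetic agrees in both subcases.
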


\begin{proof}
	For brevity, let's just show the first case, and the second case is similar. Let  \\
	\begin{equation*}
		\begin{split}
			T_{1}~~&=\bigcup_{\substack{2t_l\xi+h_l+ \alpha\leq j\leq 2(t_l+1)\xi+h_l-1 -\alpha,\\if~  j\leq (a-m)\xi+(a-2m),~0 \leq i\leq\alpha,~else,~0 \leq i\leq\alpha-1}}C_{s+2(iq+j)}\\
		\end{split}
	\end{equation*}
	where $0 \leq l \leq m-1$, $a_{1l}\lceil \frac{m}{2} \rceil+b_{1l}\lfloor \frac{m}{2} \rfloor\leq t_l \leq a_{2l}\lceil \frac{m}{2} \rceil+b_{2l}\lfloor \frac{m}{2} \rfloor,~h_l=2t_l-l+1$, $a_{10}=a_{20}=b_{10}=0$, $b_{20}=1$, and  $a_{1l}=a_{1l-1}+l ~{\rm mod} ~2 ,b_{1l}=b_{1l-1}+(l+1) ~{\rm mod} ~2,a_{2l}=a_{2l-1}+(l+1) ~{\rm mod} ~2,b_{2l}=b_{2l-1}+l ~{\rm mod} ~2$.
	and
	\begin{equation*}
		\begin{split}
			T'_{1}~~=&C_{s}
			~~~\bigcup_{\substack{1 \leq j \leq \alpha,\\0\leq i\leq \xi }}C_{s+2(iq+j)}
			~~~~\cup\bigcup_{\substack{a\xi+a-m-\alpha \leq j \leq q,\\0\leq i\leq \xi-1 }}C_{s+2(iq+j)}\\
			&~~~~~\cup\bigcup_{\substack{0 \leq l \leq m-1}}
		\cup\bigcup_{\substack{2t_l(\xi+1)-h_l- \alpha\leq j\leq (2t_l+1)(\xi+1)-h_l'-2 +\alpha,\\a_{1l}\lceil \frac{m}{2} \rceil+b_{1l}\lfloor \frac{m}{2} \rfloor < t_l \leq a_{2l}\lceil \frac{m}{2} \rceil+b_{2l}\lfloor \frac{m}{2} \rfloor,~h_l=2t_l-l+1,\\if~  j\leq (a-m)\xi+(a-2m),~0 \leq i\leq\alpha,~else,~0 \leq i\leq\alpha-1}}C_{s+2(iq+j)}\\
			&~~~~
			\cup\bigcup_{\substack{1 \leq l' \leq m-1}}
			~\cup\bigcup_{\substack{2t_l'\xi+h_l'- \alpha\leq j\leq (2t_l'+1)\xi+h_l'-2 +\alpha,\\t_l'=\frac{l'+\delta }{2},~h_l'=2t_l'-l',\\if~  j\leq (a-m)\xi+(a-2m),~0 \leq i\leq\alpha,~else,~0 \leq i\leq\alpha-1}}C_{s+2(iq+j)}\\
		\end{split}
	\end{equation*}
	where $a_{10}=a_{20}=b_{10}=0$, $b_{20}=1$, and when $l$ is odd, we have \\
	$$\begin{cases}a_{1l}=a_{1l-1}+1,\\b_{1l}=b_{1l-1},\\a_{2l}=a_{2l-1},\\b_{2l}=b_{2l-1}+1,\end{cases}$$
	when  $l$ is even, we have
	$$\begin{cases}a_{1l}=a_{1l-1},\\b_{1l}=b_{1l-1}+1,\\a_{2l}=a_{2l-1}+1,\\b_{2l}=b_{2l-1},\end{cases}$$
	and $$\delta=\begin{cases}0,~~l'~is ~odd,\\1,~~l'~ is ~even.\end{cases}$$\\
	From Lemma \ref{le:4.2}, we have 
\begin{equation*}
	\begin{split}
		-qT'_{1}~~=&C_{s}
		~~~\cup\bigcup_{\substack{1 \leq j \leq \alpha,\\0\leq i\leq \xi }}C_{s+2(jq-i)}
		~~~~\cup\bigcup_{\substack{a\xi+a-m-\alpha \leq j \leq q,\\0\leq i\leq \xi-1 }}C_{s+2(jq-i)}\\
		&~~~~~\cup\bigcup_{\substack{0 \leq l \leq m-1}}
		\cup\bigcup_{\substack{2t_l(\xi+1)-h_l- \alpha\leq j\leq (2t_l+1)(\xi+1)-h_l'-2 +\alpha,\\a_{1l}\lceil \frac{m}{2} \rceil+b_{1l}\lfloor \frac{m}{2} \rfloor < t_l \leq a_{2l}\lceil \frac{m}{2} \rceil+b_{2l}\lfloor \frac{m}{2} \rfloor,~h_l=2t_l-l+1,\\if~  j\leq (a-m)\xi+(a-2m),~0 \leq i\leq\alpha,~else,~0 \leq i\leq\alpha-1}}C_{s+2(jq-i)}\\
		&~~~~
		\cup\bigcup_{\substack{1 \leq l' \leq m-1}}
		~\cup\bigcup_{\substack{2t_l'\xi+h_l'- \alpha\leq j\leq (2t_l'+1)\xi+h_l'-2 +\alpha,\\t_l'=\frac{l'+\delta }{2},~h_l'=2t_l'-l',\\if~  j\leq (a-m)\xi+(a-2m),~0 \leq i\leq\alpha,~else,~0 \leq i\leq\alpha-1}}C_{s+2(jq-i)}\\
	\end{split}
\end{equation*}
	
	It is easy to check that $-qT_1'=T_1'$. From the definitions of $Z$, $T_1$ and $T_1'$, we have $Z=T_1\cup T_1'$. 
	Then from the definition of $Z_{1}$,
	\begin{equation*}
	\begin{split}
	Z_{1}=Z\cap (-qZ)&=(T_1\cup T_1')\cap(-qT_1\cup -qT_1')\\
	&=(T_1\cap-qT_1)\cup(T_1\cap-qT_1')\cup(T_1'\cap-qT_1)\cup(T_1'\cap-qT_1')\\
	&=T_1'.
	\end{split}
	\end{equation*}
	Therefore, $|Z_{1}|=|T_1'|=2\alpha(a\alpha +2a-2m)+2a-4m+1$.

\end{proof}
\ 

From Lemmas \ref{le:4.2}, \ref{le:4.3} and Theorem \ref{th:4.4} above, we can obtain the first construction of EAQECCs in the following theorem.


\begin{theorem} \label{th:4.5}
	Let $n=\frac{2(q^2+1)}{a}$ and $s=\frac{n}{2}$, where $q=a\xi+a-m$ is an odd prime power, $a=m^2+1$ and $m$ is an odd integer.
	If $m\equiv 1~{\rm mod}~4$ and $\xi$ is an odd (even) positive integer, or $m\equiv 3~{\rm mod}~4$ and $\xi$ is an even (odd) positive integer, then there are EAQECCs with parameters $$[[n, k, d; c]]_q([[n, k-1, d; c-1]]_q),$$ 
	where 
	$$k=n-4\alpha q+4(a-m)(\alpha -\xi ) +2a\alpha^2-2a+4m-1,$$
	$$d=2[\alpha q+(a-m)\xi+a-2m +1],$$ 
	$$c=2\alpha[(a\alpha +2(a-m)]+2a-4m+1,$$ 
	and $\alpha$ is a positive integer satisfying $1\leq \alpha\leq \xi $. Besides, they are EAQMDS codes if $d\leq\frac{n+2}{2}$.\\
\end{theorem}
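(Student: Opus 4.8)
The plan is to read the four parameters $n,k,d,c$ off the combinatorial data already assembled in Lemmas \ref{le:4.1} and \ref{le:4.2} and Theorem \ref{th:4.4}, and then to feed them into the decomposition result (Theorem \ref{th:3.1}) together with the BCH and entanglement-assisted Singleton bounds (Theorems \ref{th:2.2} and \ref{th:3.2}). Throughout I write $T = \alpha q + (a-m)\xi + a - 2m$, so that the defining set is $Z = C_s \cup C_{s+2} \cup \dots \cup C_{s+2T}$.

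First I would compute $|Z|$. By Lemma \ref{le:4.1} the coset $C_s = \{s\}$ is a singleton while $C_{s+2l} = \{s-2l,\, s+2l\}$ has size two for $1\le l\le s-1$. One checks that $s$ is odd, so every $s\pm 2l$ is an odd residue of the form $1+2j$; and in the range $d\le \frac{n+2}{2}$ one has $2T\le s-1$, whence $s+2T<3s$ and the listed cosets are pairwise distinct and avoid $C_{3s}$. Consequently, as a subset of $\mathbb{Z}_{2n}$, $Z=\{s-2T,\dots,s-2,s,s+2,\dots,s+2T\}$ is a block of $2T+1$ consecutive odd residues, so $|Z|=2T+1$.

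Next I would pin down the minimum distance and gather the remaining cardinalities. Since the $2T+1$ exponents in $Z$ are consecutive of the form $1+2j$, the roots $\beta^{s-2T},\dots,\beta^{s+2T}$ of $g(x)$ are consecutive in the sense of Theorem \ref{th:2.2}, and the BCH bound yields $d_{\mathcal{C}}\ge (2T+1)+1 = 2[\alpha q+(a-m)\xi+a-2m+1]$, the claimed value of $d$. From Theorem \ref{th:4.4} the component $Z_1 = Z\cap(-qZ)$ has $|Z_1| = 2\alpha[a\alpha+2(a-m)]+2a-4m+1$, so $c=|Z_1|$ is the asserted quantity. Applying Theorem \ref{th:3.1} then produces an EAQECC $[[\,n,\ n-2|Z|+|Z_1|,\ d_{\mathcal{C}};\ |Z_1|\,]]_q$; substituting $|Z|=2T+1$ and the value of $|Z_1|$ and collecting terms gives $k = n-4\alpha q+4(a-m)(\alpha-\xi)+2a\alpha^2-2a+4m-1$.

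Finally, for the EAQMDS property I would compare against the Singleton bound. A direct computation gives $n+c-k = 2|Z| = 4T+2 = 2(d-1)$. When $d\le\frac{n+2}{2}$, Theorem \ref{th:3.2} forces $n+c-k\ge 2(d_{\mathcal{C}}-1)$, hence $d_{\mathcal{C}}\le 2T+2$; combined with the BCH lower bound $d_{\mathcal{C}}\ge 2T+2$ this pins the true minimum distance to exactly $2T+2$ and gives equality in the Singleton bound, so the code is EAQMDS. The parenthetical case is handled identically, replacing $|Z_1|$ by $|Z_1|-1$, which lowers both $k$ and $c$ by one. The genuinely hard part, namely the evaluation of $|Z_1|$, has already been absorbed into Theorem \ref{th:4.4} (resting on the intricate Lemma \ref{le:4.3}); what remains here is the block-count of $|Z|$, the BCH estimate, the routine simplification of $k$, and the standard sandwiching argument that makes the BCH bound tight.
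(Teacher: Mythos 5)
Your proposal is correct and follows essentially the same route as the paper: count $|Z|=2T+1$, apply the BCH bound, import $|Z_1|$ from Theorem \ref{th:4.4}, assemble the parameters via Theorem \ref{th:3.1}, and verify $n+c-k=2(d-1)$. Two small remarks. First, the paper pins the true minimum distance by applying the classical Singleton bound (Theorem \ref{th:2.1}) to the underlying negacyclic code, concluding it is MDS, whereas you sandwich it using the entanglement-assisted Singleton bound; both work. Second, you invoke $d\leq\frac{n+2}{2}$ to guarantee $2T\leq s-1$ (so that the cosets $C_s,\dots,C_{s+2T}$ are distinct and $|Z|=2T+1$), but the theorem asserts the EAQECC parameters for all $1\leq\alpha\leq\xi$, including cases with $d>\frac{n+2}{2}$ (e.g.\ $[[122,1,122;121]]_{11}$); fortunately the hypothesis is unnecessary, since $q=a\xi+a-m$ gives $s=a\xi^2+2(a-m)\xi+a-2m+1$ and hence $T\leq\xi q+(a-m)\xi+a-2m=s-1$ automatically, so your block count is valid over the full range once this is observed.
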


\begin{proof}
	For brevity, let's just show the first case, and the second case is similar. For a positive integer $\alpha$ with $1\leq \alpha\leq \xi$, suppose   
	that $\mathcal{C}$ is a negacyclic code of length $n$ with defining set        
	$$Z=C_s\cup C_{s+2}\cup\dots\cup C_{s+2[\alpha q+(a-m)\xi +a-2m]},$$
	where $\xi$, $a$, $m$, $q$ and $s$  are defined as above.
	
	Then the dimension of $\mathcal{C}$ is $n-2[\alpha q+(a+m)\xi ]+4m-2a-1$.
	Observe that negacyclic code $\mathcal{C}$ have $2[\alpha q+(a-m)\xi+a-2m]+1$ consecutive roots.
	By Theorem \ref{th:2.2}, the minimum distance of $\mathcal{C}$ is at least $2[\alpha q+(a-m)\xi+a-2m +1]$. Then from Theorem \ref{th:2.1}, $\mathcal{C}$ is an MDS code with parameters $[n,n-2[\alpha q+(a+m)\xi ]+4m-2a-1, 2[\alpha q+(a-m)\xi+a-2m +1]~]$.
	Furthermore, we have $|Z_{1}|=|T_1'|=2\alpha(a\alpha +2a-2m)+2a-4m+1$ from Theorem \ref{th:4.4}.
	By Theorem \ref{th:3.1}, there are EAQECCs with parameters $[[n, k, d; c]]_q$, 	where 
	$$k=n-4\alpha q+4(a-m)(\alpha -\xi ) +2a\alpha^2-2a+4m-1,$$
	$$d=2[\alpha q+(a-m)\xi+a-2m +1],$$ 
	$$c=2\alpha[(a\alpha +2(a-m)]+2a-4m+1.$$
	It is easy to check that 
	$$n+c-k=4[\alpha q+(a-m)\xi+a-2m +1]-2=2(d_1-1).$$
	Therefore, it implies that the EAQECCs are EAQMDS codes if $d\leq\frac{n+2}{2}$ by Theorem \ref{th:3.2}.
	
\end{proof}
\

\noindent\textbf{Example 2}
~Let $m=1$,~then~$a=m^2+1=2$ and~$q=a\xi +a-m=2\xi +1$, then $1\leq \alpha\leq \xi =5,$ we have $q=11$.
Then $n=q^2+1=122$, and $~s=\frac{q^2+1}{2}=61$. 
According to Lemma \ref{le:4.3}, we have $l=0$, $t_0=0$, and $h_0=1$, we can obtain
\begin{equation*}
	\begin{split}
		T_{1}=\bigcup_{\substack{1+\alpha\leq j \leq 9-\alpha,\\if~ j\leq 5,~0\leq u\leq \alpha ,\\else,~0\leq u\leq \alpha-1}}C_{s+2(iq+j)},
	\end{split}
\end{equation*}
where $\alpha=1, 2,3,4,5$ respectively. It is easy to check that $T_1\cap-qT_1=\emptyset$.
Then according to Theorem \ref{th:4.5}, we can obtain following EAQECCs: $$[[122,65,34;9]]_{11}, [[122,37,56;25]]_{11}, [[122,17,78;49]]_{11},$$  $$[[122,5,100;81]]_{11}, [[122,1,122;121]]_{11}.$$ Especially, they are EAQMDS codes when $d\leq 62$.\\

\noindent\textbf{Example 3}
~Let $m=3$,~then~$a=m^2+1=10$~and~$q=a\xi +a-m=10\xi +7.$ Let $1\leq \alpha\leq \xi =3,$ we have $q=37$.
Then $n=\frac{q^2+1}{5}=274$, and $~s=\frac{q^2+1}{10}=137$.
According to Lemma \ref{le:4.3}, we have $0 \leq l \leq 2$ and 
$$\begin{cases} 0<t_0\leq 1,~h_0=2t_0\\1<t_1\leq 3,~h_1=2t_1-1\\3<t_2\leq 4,~h_2=2t_2-2\\\end{cases}$$ we can obtain

\begin{equation*}
	\begin{split}
		T_{1}~~=&\bigcup_{\substack{1\leq j \leq 3-\alpha,\\~0\leq i\leq \alpha}}C_{s+2(iq+j)}
		\cup\bigcup_{\substack{3(2t_0-1)+2t_0+ \alpha\leq j\leq 3(2t_0+1)+2t_0-1 -\alpha,\\ if~  j\leq 25,~0 \leq i\leq\alpha,~else,~0 \leq i\leq\alpha-1}}C_{s+2(iq+j)}\\
		&\cup\bigcup_{\substack{3(2t_1-1)+2t_1-1+ \alpha\leq j\leq 3(2t_1+1)+2t_1-2 -\alpha,\\ if~  j\leq 25,~0 \leq i\leq\alpha,~else,~0 \leq i\leq\alpha-1}}C_{s+2(iq+j)}\\
		&\cup\bigcup_{\substack{3(2t_2-1)+2t_2-2+ \alpha\leq j\leq 3(2t_2+1)+2t_2-3 -\alpha,\\ if~  j\leq 25,~0 \leq i\leq\alpha,~else,~0 \leq i\leq\alpha-1}}C_{s+2(iq+j)}
		\cup\bigcup_{\substack{35+\alpha \leq j \leq q,\\~0\leq i\leq \alpha-1}}C_{s+2(iq+j)}
	\end{split}
\end{equation*}
where $\alpha=1,2,3$ respectively. It is easy to check that $T_1\cap-qT_1=\emptyset$.
   Then according to Theorem \ref{th:4.5}, we can obtain EAQECCs as follows: $$[[274,80,126;56]]_{37}, [[274,20,200;144]]_{37}, [[274,0,274;272]]_{37}.$$
   Especially, they are EAQMDS codes when $d\leq 138$.\\

\noindent\textbf{Case \uppercase\expandafter{\romannumeral 2} $~~~~q=a\xi+m$}\\

As for the case that $n=\frac{2(q^2+1)}{a}$ and $q=a\xi+m$, where $a=m^2+1$ and $m$ is an odd integer, we can produce the following EAQECCs. The proof is similar to that in the Case \uppercase\expandafter{\romannumeral 1}, so we omit it here for brevity.

\begin{lemma}\label{le:4.6}
 	Let $n=\frac{2(q^2+1)}{a}$ and $s=\frac{n}{2}$, where $q=a\xi+m$ is an odd prime power, $a=m^2+1$ and $m$ is an odd integer. For a positive integer $\alpha$ with $1\leq \alpha\leq \xi $.\\
 	(1) If $m\equiv 1~{\rm mod}~4$ and $\xi$ is an odd positive integer, or $m\equiv 3~{\rm mod}~4$ and $\xi$ is an even positive integer. Let \\
 	\iffalse
 	\begin{equation*}
 		\begin{split}
 			T_{1}~~&=
 			\bigcup_{\substack{0 \leq l \leq m-1}}\cup\bigcup_{\substack{2t_l\xi+h_l+ \alpha\leq j\leq 2(t_l+1)\xi+h_l-1 -\alpha,\\a_{1l}\lceil \frac{m}{2} \rceil+b_{1l}\lfloor \frac{m}{2} \rfloor\leq t_l \leq a_{2l}\lceil \frac{m}{2} \rceil+b_{2l}\lfloor \frac{m}{2} \rfloor,~h_l=l+1,\\if~  j\leq m\xi,~0 \leq i\leq\alpha,~else,~0 \leq i\leq\alpha-1}}C_{s+2(iq+j)}\\
 		\end{split}
 	\end{equation*}
 	where $a_{10}=a_{20}=b_{10}=0$, $b_{20}=1$, and when $l$ is odd, we have\\
 	$$\begin{cases}a_{1l}=a_{1l-1}+1,\\b_{1l}=b_{1l-1},\\a_{2l}=a_{2l-1},\\b_{2l}=b_{2l-1}+1,\end{cases}$$ 
 	when $l$ is even, we have
 	$$\begin{cases}a_{1l}=a_{1l-1},\\b_{1l}=b_{1l-1}+1,\\a_{2l}=a_{2l-1}+1,\\b_{2l}=b_{2l-1}.\end{cases}$$
 	Then $T_1\cap-qT_1=\emptyset.$\\
 	\fi
 	\begin{equation*}
 		\begin{split}
 			T_{1}~~&=\bigcup_{\substack{2t_l\xi+h_l+ \alpha\leq j\leq 2(t_l+1)\xi+h_l-1 -\alpha,\\if~  j\leq m\xi,~0 \leq i\leq\alpha,~else,~0 \leq i\leq\alpha-1}}C_{s+2(iq+j)}\\
 		\end{split}
 	\end{equation*}
 	Then $T_1\cap-qT_1=\emptyset.$ where $0 \leq l \leq m-1$, $a_{1l}\lceil \frac{m}{2} \rceil+b_{1l}\lfloor \frac{m}{2} \rfloor\leq t_l \leq a_{2l}\lceil \frac{m}{2} \rceil+b_{2l}\lfloor \frac{m}{2} \rfloor,~h_l=l+1$, $a_{10}=a_{20}=b_{10}=0$, $b_{20}=1$, and  $a_{1l}=a_{1l-1}+l ~{\rm mod} ~2 ,b_{1l}=b_{1l-1}+(l+1) ~{\rm mod} ~2,a_{2l}=a_{2l-1}+(l+1) ~{\rm mod} ~2,b_{2l}=b_{2l-1}+l ~{\rm mod} ~2$.
 	
 	(2) If $m\equiv 1~{\rm mod}~4$ and $\xi$ is an even positive integer, or $m\equiv 3~{\rm mod}~4$ and $\xi$ is an odd positive integer. Let \\
 	\iffalse
 	\begin{equation*}
 		\begin{split}
 			T_{1}~~&=~\bigcup_{\substack{1 \leq j \leq \xi-\alpha, \\0 \leq i\leq\alpha}}C_{s+2(iq+j)}
 			\cup\bigcup_{\substack{0 \leq l \leq m-1}}~~\cup\bigcup_{\substack{(2t_l-1)\xi+h_l+ \alpha\leq j\leq (2t_l+1)\xi+h_l-1 -\alpha,\\a_{1l}\lceil \frac{m}{2} \rceil+b_{1l}\lfloor \frac{m}{2} \rfloor < t_l \leq a_{2l}\lceil \frac{m}{2} \rceil+b_{2l}\lfloor \frac{m}{2} \rfloor,~h_l=l+1,\\if~  j\leq m\xi,~0 \leq i\leq\alpha,~else,~0 \leq i\leq\alpha -1}}C_{s+2(iq+j)}\\
 			&\cup\bigcup_{\substack{m^2\xi +m+1+\alpha \leq j \leq      q,\\0 \leq i\leq\alpha-1}}C_{s+2(iq+j)}\\
 		\end{split}
 	\end{equation*}
 	where $a_{10}=a_{20}=b_{10}=0$, $b_{20}=1$, and when  $l$ is odd, we have \\
  	$$\begin{cases}a_{1l}=a_{2l-1},\\b_{1l}=b_{2l-1},\\a_{2l}=a_{1l}+1,\\b_{2l}=b_{1l},\end{cases}$$
  	when $l$ is even, we have
  	$$\begin{cases}a_{1l}=a_{2l-1},\\b_{1l}=b_{2l-1},\\a_{2l}=a_{1l},\\b_{2l}=b_{1l}+1.\end{cases}$$
 	Then $T_1\cap-qT_1=\emptyset.$
 	\fi
 	\begin{equation*}
 		\begin{split}
 			T_{1}~~= ~~~&\bigcup_{\substack{1 \leq j \leq \xi-\alpha, \\0 \leq i\leq\alpha}}C_{s+2(iq+j)}
 			\cup\bigcup_{\substack{m^2\xi +m+1+\alpha \leq j \leq q,\\0 \leq i\leq\alpha-1}}C_{s+2(iq+j)}\\
 			&\cup\bigcup_{\substack{0 \leq l \leq m-1}}~~\cup\bigcup_{\substack{(2t_l-1)\xi+h_l+ \alpha\leq j\leq (2t_l+1)\xi+h_l-1 -\alpha,\\if~  j\leq m\xi,~0 \leq i\leq\alpha,~else,~0 \leq i\leq\alpha-1}}C_{s+2(iq+j)}\\
 		\end{split}
 	\end{equation*}
 	Then $T_1\cap-qT_1=\emptyset.$ where $0 \leq l \leq m-1$, $a_{1l}\lceil \frac{m}{2} \rceil+b_{1l}\lfloor \frac{m}{2} \rfloor\leq t_l \leq a_{2l}\lceil \frac{m}{2} \rceil+b_{2l}\lfloor \frac{m}{2} \rfloor,~h_l=l+l$, $a_{10}=a_{20}=b_{10}=0$, $b_{20}=1$, and  $a_{1l}=a_{1l-1} ,b_{1l}=b_{1l-1},a_{2l}=a_{2l-1}+l ~{\rm mod} ~2 ,b_{2l}=b_{2l-1}+(l+1) ~{\rm mod} ~2$.
 	\end{lemma}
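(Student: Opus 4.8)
The plan is to reduce the set-theoretic claim $T_1\cap(-qT_1)=\emptyset$ to an elementary, if lengthy, comparison of intervals of the parameter $j$, in exact parallel with the appendix proof of Lemma \ref{le:4.3}. The engine is Lemma \ref{le:4.2}: under the hypotheses of part (1) it gives $-qC_{s+2(iq+j)}=C_{s+2(jq-i)}$, while under those of part (2) it gives $-qC_{s+2(iq+j)}=C_{3s+2(jq-i)}$; in either case $-qT_1$ is obtained from $T_1$ by the substitution $(i,j)\mapsto(j,-i)$ in the subscript. Because $a=m^2+1$ is even (as $m$ is odd) we have $q^2+1=as\equiv0\pmod{2s}$, hence $q^2\equiv-1\pmod{2s}$; this is the arithmetic reason the swap-and-negate rule $v=iq+j\mapsto jq-i$ governs the action of $-q$ on the defining values, since $-qv\equiv-(jq-i)\pmod{2s}$. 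I would record this identity at the outset so that the bijection between cosets and parameter pairs is transparent, and so that the choice of image family ($C_{s+2(\cdot)}$ versus $C_{3s+2(\cdot)}$) is seen to be dictated by whether $q+1$ or $q-1$ is divisible by $4$, exactly as in the proof of Lemma \ref{le:4.2}.

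First I would note, as observed immediately before the lemma, that every subscript $s+2(iq+j)$ occurring in $T_1$ lies in the interval $[s,3s]$, and the same holds for the subscripts of $-qT_1$. On this range the assignment of a representative to a coset is controlled, and two cosets $C_{s+2v_1}$, $C_{s+2v_2}$ coincide precisely when $v_1\equiv\pm v_2\pmod{2s}$. Hence a common coset in $T_1\cap(-qT_1)$ would force parameter pairs $(i_1,j_1)$, $(i_2,j_2)$ in the defining region with
\[
i_1q+j_1\equiv\pm(j_2q-i_2)\pmod{2s}
\]
(with an additional shift by $s$ in the part-(2) case, arising from the $3s$ in $C_{3s+2(jq-i)}$), and it suffices to exclude both signs.

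Second, I would fix the outer index $i$ — which only shifts a defining value by the multiple $iq$ of $q$ — and analyse the resulting $j$-intervals. The union defining $T_1$ is stratified by $l\in\{0,\dots,m-1\}$ and the nested counters $t_l$ (bounded in terms of $\lceil m/2\rceil$ and $\lfloor m/2\rfloor$ through $a_{1l},a_{2l},b_{1l},b_{2l}$); within each stratum the admissible $j$ runs over an explicit interval trimmed by $+\alpha$ at the bottom and $-\alpha$ at the top, with $h_l$ setting the offset. The substitution $(i,j)\mapsto(j,-i)$ carries each source band to a corresponding band of $-qT_1$, and I would show that the trimmed source bands and the image bands interleave without meeting by comparing their endpoints, using $q=a\xi+m$, $a=m^2+1$, and $1\le\alpha\le\xi$ to bound the arithmetic. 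Part (2) follows the same template, its region differing from part (1) only by the two transitional strata $1\le j\le\xi-\alpha$ and $m^2\xi+m+1+\alpha\le j\le q$ and by the offset $h_l$, which are handled as additional (and simpler) bands.

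The main obstacle is precisely this endpoint bookkeeping: since $T_1$ is a union of $O(m)$ interleaved strata, one must verify pairwise that no image band of any stratum overlaps any source band of any (possibly different) stratum, which reduces to a finite but large system of inequalities among the quantities $2t_l\xi+h_l$, $jq-i$, and the $\pm\alpha$ buffers. The buffers of width $\alpha$ inserted at both ends of every $j$-interval are exactly what create the gaps that keep the bands separate; the crux is to check that each gap is wide enough, i.e. that the smallest element of every image band strictly exceeds the largest element of any source band it could collide with. Once these inequalities are established for both parity configurations and for the transitional strata, disjointness is immediate and the remainder of the argument is mechanical.
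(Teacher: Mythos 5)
Your plan is correct and follows essentially the same route as the paper: the authors omit the proof of Lemma \ref{le:4.6} entirely, stating it is ``similar to that in Case I,'' and the Case I argument (the appendix proof of Lemma \ref{le:4.3}) is exactly what you describe --- invoke Lemma \ref{le:4.2} to rewrite $-qT_1$ via $(i,j)\mapsto(j,-i)$, note the subscripts stay in $[s,3s]$, and then check band by band that every $jq-i$ exceeds every $iq+j$ so no coset coincidence occurs. Your reduction of coset equality to $v_1\equiv\pm v_2\pmod{2s}$ (with the extra shift by $s$ in part (2)) is sound, and the endpoint inequalities you defer are precisely the ones the paper itself only asserts as ``easy to check.''
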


 \noindent\textbf{Example 4} Let $m=5$, then $a=26$, $q=26\xi+5$, $n=\frac{q^2+1}{13}$ and $s=\frac{n}{2}$. For a positive integer $\alpha$ with $1\leq \alpha\leq \xi $. Let $\xi$ is an even positive integer. According to (2) of Lemma 4.6 , we have $0 \leq l \leq 4$. Below we use the following Table $2$ to give the range of each parameter to get $T_1$. We have\\
 \begin{equation*}
 \begin{split}
T_{1}~~=&~~~~~\bigcup_{\substack{1 \leq j \leq \xi-\alpha, \\0 \leq i\leq\alpha}}C_{s+2(iq+j)}
~~~~~~
\cup\bigcup_{\substack{\xi+1+\alpha \leq j \leq 3\xi-\alpha, \\0 \leq i\leq\alpha}}C_{s+2(iq+j)}
~~
\cup\bigcup_{\substack{3\xi+1+\alpha \leq j \leq 5\xi-\alpha, \\0 \leq i\leq\alpha}}C_{s+2(iq+j)}\\
&\cup\bigcup_{\substack{5\xi+2+\alpha \leq j \leq 7\xi+1-\alpha, \\0 \leq i\leq\alpha-1}}C_{s+2(iq+j)}
\cup\bigcup_{\substack{7\xi+2+\alpha \leq j \leq 9\xi+1, \\0 \leq i\leq\alpha-1}}C_{s+2(iq+j)}
~~
\cup\bigcup_{\substack{9\xi+2 \leq j \leq 11\xi+1-\alpha, \\0 \leq i\leq\alpha-1}}C_{s+2(iq+j)}\\
&\cup\bigcup_{\substack{11\xi+3 \leq j \leq 13\xi+2-\alpha, \\0 \leq i\leq\alpha-1}}C_{s+2(iq+j)}
\cup\bigcup_{\substack{13\xi+3 \leq j \leq 15\xi+2-\alpha, \\0 \leq i\leq\alpha-1}}C_{s+2(iq+j)}
\cup\bigcup_{\substack{15\xi+4 \leq j \leq 17\xi+3-\alpha, \\0 \leq i\leq\alpha-1}}C_{s+2(iq+j)}\\
&\cup\bigcup_{\substack{17\xi+4 \leq j \leq 19\xi+3-\alpha, \\0 \leq i\leq\alpha-1}}C_{s+2(iq+j)}
\cup\bigcup_{\substack{19\xi+4 \leq j \leq 21\xi+3-\alpha, \\0 \leq i\leq\alpha-1}}C_{s+2(iq+j)}
\cup\bigcup_{\substack{21\xi+5 \leq j \leq 23\xi+4-\alpha, \\0 \leq i\leq\alpha-1}}C_{s+2(iq+j)}\\
&\cup\bigcup_{\substack{23\xi+5 \leq j \leq 25\xi+4-\alpha, \\0 \leq i\leq\alpha-1}}C_{s+2(iq+j)}
~
\cup\bigcup_{\substack{25\xi+6 \leq j \leq 26\xi+5, \\0 \leq i\leq\alpha-1}}C_{s+2(iq+j)}
 \end{split}
 \end{equation*}
 \\
 Then it is easy to check that $T_1\cap-qT_1=\emptyset$.

\begin{theorem} \label{th:4.7}
	Let $n=\frac{2(q^2+1)}{a}$ and $s=\frac{n}{2}$, where $q=a\xi+m$ is an odd prime power, $a=m^2+1$ and $m$ is an odd integer. For a positive integer $\alpha$ with $1\leq \alpha\leq \xi $, let $\mathcal{C}$ be a negacyclic code with defining set $Z$ given as follows $$Z=C_s\cup C_{s+2}\cup\dots\cup C_{s+2[\alpha q+m\xi]}.$$
	If $m\equiv 1~{\rm mod}~4$ and $\xi$ is an odd (even) positive integer, or $m\equiv 3~{\rm mod}~4$ and $\xi$ is an even (odd) positive integer, 	
	then $|Z_{1}|=2\alpha(a\alpha +2m)+1$ ($|Z_{1}|=2\alpha(a\alpha +2m)$).\\
\end{theorem}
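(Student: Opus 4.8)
The plan is to mirror the proof of Theorem~\ref{th:4.4}, decomposing the defining set as $Z = T_1 \cup T_1'$, where $T_1$ is the union exhibited in Lemma~\ref{le:4.6} (the branch matching the parity hypotheses on $m$ and $\xi$), so that $T_1 \cap (-qT_1) = \emptyset$ comes for free, and $T_1' = Z \setminus T_1$ is the ``self-paired'' remainder. The entire argument reduces to producing $T_1'$ explicitly as a union of cosets $C_{s+2(iq+j)}$, verifying that $Z = T_1 \cup T_1'$ is a genuine (disjoint) partition, and showing that $-qT_1' = T_1'$.

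For the last identity I would apply Lemma~\ref{le:4.2}(2): under $q = a\xi + m$ and the stated parities, $-qC_{s+2(iq+j)} = C_{s+2(jq-i)}$ in the first (odd) case and $C_{3s+2(jq-i)}$ in the second. Thus $-q$ acts on coset indices by the involution $(i,j)\mapsto(j,-i)$ (note $(-q)^2 \equiv q^2$ acts as the identity on each $q^2$-cyclotomic coset, so $-q$ is genuinely an involution on cosets), and $T_1'$ is assembled precisely from those index pairs permuted among themselves. Once $-qT_1' = T_1'$ is established, both cross terms vanish: $T_1 \cap (-qT_1') = T_1 \cap T_1' = \emptyset$ by the partition, and applying the involution to $T_1' \cap (-qT_1)$ gives $(-qT_1') \cap T_1 = T_1' \cap T_1 = \emptyset$ as well. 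Hence, exactly as in Theorem~\ref{th:4.4},
\begin{equation*}
Z_1 = Z\cap(-qZ) = (T_1\cap -qT_1)\cup(T_1\cap -qT_1')\cup(T_1'\cap -qT_1)\cup(T_1'\cap -qT_1') = T_1'.
\end{equation*}
A pleasant byproduct is that the singleton $C_s$ (index $i=j=0$) satisfies $-qC_s = C_s$ in the first case, so it lies in $Z_1$ and accounts for the additive $+1$, whereas in the second case $-qC_s = C_{3s}\notin Z$, which is why the $+1$ disappears.

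The last step is the cardinality count $|Z_1| = |T_1'|$: summing the lengths of the constituent blocks of $T_1'$ (each nontrivial coset contributing two elements) yields $2\alpha(a\alpha+2m)$, plus the contribution of $C_s$ in the first case, giving $2\alpha(a\alpha+2m)+1$ and $2\alpha(a\alpha+2m)$ respectively. The main obstacle is the explicit determination of $T_1'$ and the block-by-block verification that $-qT_1' = T_1'$: this is the bookkeeping-heavy core, requiring one to carry the layered index ranges---the parameters $t_l$, $h_l$, and the recurrences defining $a_{1l},b_{1l},a_{2l},b_{2l}$ that partition the admissible values of $j$---through the involution $(i,j)\mapsto(j,-i)$ and to check that the boundary blocks close up. With that correspondence pinned down, disjointness and the count are routine, and the proof of Theorem~\ref{th:4.4} (together with Lemma~\ref{le:4.6}) supplies the template verbatim.
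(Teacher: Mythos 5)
Your proposal follows essentially the same route as the paper, which itself omits the proof of Theorem~\ref{th:4.7} with the remark that it is analogous to Case~I: decompose $Z=T_1\cup T_1'$ with $T_1$ taken from Lemma~\ref{le:4.6}, use Lemma~\ref{le:4.2} and the involution property of $-q$ on cyclotomic cosets to get $-qT_1'=T_1'$ and hence $Z_1=T_1'$, and count $|T_1'|$, with the presence or absence of $C_s$ in $Z_1$ (via $-qC_s=C_s$ versus $-qC_s=C_{3s}\notin Z$) correctly explaining the $+1$ discrepancy between the two parity cases. The only part not carried out --- the explicit block description of $T_1'$ and the verification that $-q$ permutes its blocks --- is exactly the bookkeeping the paper also suppresses, so the proposal is consistent with the paper's argument.
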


\begin{theorem} \label{th:4.8}
	Let $n=\frac{2(q^2+1)}{a}$ and $s=\frac{n}{2}$, where $q=a\xi+m$ is an odd prime power, $a=m^2+1$ and $m$ is an odd integer.
	If $m\equiv 1~{\rm mod}~4$ and $\xi$ is an odd (even) positive integer, or $m\equiv 3~{\rm mod}~4$ and $\xi$ is an even (odd) positive integer, 
	then there are EAQECCs with parameters $$[[n, k, d; c]]_q([[n, k-1, d; c-1]]_q),$$ 
	where 
	$$k=n-4\alpha q+4m(\alpha-\xi)+2a\alpha^2-1,$$
	$$d=2(\alpha q+m\xi +1),$$ 
	$$c=2\alpha(a\alpha +2m)+1,$$ 
	and $\alpha$ is a positive integer satisfying $1\leq \alpha\leq \xi $. Besides, they are EAQMDS codes if $d\leq\frac{n+2}{2}$.\\
\end{theorem}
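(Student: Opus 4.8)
The plan is to mirror the proof of Theorem \ref{th:4.5} verbatim, substituting the count of $|Z_1|$ supplied by Theorem \ref{th:4.7} for the one used there, and to present only the first (non-parenthetical) case since the second is entirely analogous. First I would fix a positive integer $\alpha$ with $1\le\alpha\le\xi$ and take $\mathcal{C}$ to be the negacyclic code of length $n$ with defining set $Z=C_s\cup C_{s+2}\cup\dots\cup C_{s+2[\alpha q+m\xi]}$, where $q$, $a$, $m$, $s$ are as stated.

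The first step is to pin down the dimension and the exact minimum distance. By Lemma \ref{le:4.1} every coset $C_{s+2j}$ with $j\ge 1$ is the symmetric pair $\{s-2j,\,s+2j\}$ while $C_s=\{s\}$, so as a set of residues $Z$ telescopes into the single block, in steps of $2$, running from $s-2[\alpha q+m\xi]$ up to $s+2[\alpha q+m\xi]$. This is the key observation that produces the factor-of-two gain: although only $\alpha q+m\xi+1$ cosets are listed, $Z$ actually contains $2(\alpha q+m\xi)+1$ consecutive roots, so $|Z|=2(\alpha q+m\xi)+1$ and $\dim\mathcal{C}=n-|Z|$. Applying the BCH bound (Theorem \ref{th:2.2}) to these consecutive roots gives $d\ge 2(\alpha q+m\xi+1)$, and since $n-k=|Z|=2(\alpha q+m\xi+1)-1=d-1$, the Singleton bound (Theorem \ref{th:2.1}) forces equality, so $\mathcal{C}$ is MDS with minimum distance exactly $d=2(\alpha q+m\xi+1)$.

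The second step is to feed these data into the entanglement-assisted construction. From Theorem \ref{th:4.7} I have $|Z_1|=2\alpha(a\alpha+2m)+1$ in the first case, so Theorem \ref{th:3.1} yields an EAQECC with $c=|Z_1|$, minimum distance $d$, and dimension $k=n-2|Z|+|Z_1|$. A direct substitution $k=n-2[2(\alpha q+m\xi)+1]+2\alpha(a\alpha+2m)+1=n-4\alpha q+4m(\alpha-\xi)+2a\alpha^2-1$ recovers the stated value; in the second case Theorem \ref{th:4.7} gives $|Z_1|=2\alpha(a\alpha+2m)$, which lowers both $k$ and $c$ by one and produces the parenthetical parameters $[[n,k-1,d;c-1]]_q$.

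Finally I would confirm optimality through the entanglement-assisted Singleton bound: a short computation gives $n+c-k=4(\alpha q+m\xi)+2=2(d-1)$, so equality holds in Theorem \ref{th:3.2} and the codes are EAQMDS whenever $d\le\frac{n+2}{2}$. There is no genuine obstacle remaining in this theorem, since all the delicate combinatorics—the decomposition $Z=T_1\cup T_1'$ and the disjointness $T_1\cap(-qT_1)=\emptyset$ underlying the evaluation of $|Z_1|$—has already been handled in Lemma \ref{le:4.6} and Theorem \ref{th:4.7}. The only points demanding care are the telescoping of $Z$ into a consecutive block, so that the BCH bound delivers the doubled distance, and the bookkeeping that reconciles the closed forms for $k$, $d$, and $c$ with the Singleton equality.
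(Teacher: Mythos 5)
Your proposal is correct and follows exactly the route the paper intends: the paper omits the proof of Theorem \ref{th:4.8}, stating it is analogous to Case I, and your argument is precisely the Theorem \ref{th:4.5} proof transplanted to Case II (consecutive-root count $|Z|=2(\alpha q+m\xi)+1$, BCH plus Singleton to get an MDS negacyclic code, $|Z_1|$ from Theorem \ref{th:4.7}, then Theorem \ref{th:3.1} and the check $n+c-k=2(d-1)$). The arithmetic reconciling $k$, $d$, and $c$ is also right.
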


\noindent\textbf{Example 5}
~Let $m=7$,~then~$a=m^2+1=50$~and~$q=a\xi +m=50\xi +7.$ Let $1\leq \alpha\leq \xi =2,$ we have $q=107$.
Then $n=\frac{q^2+1}{25}=458$ and $~s=\frac{q^2+1}{50}=229$. 
According to Lemma \ref{le:4.6}, we have $0 \leq l \leq 6$ and 
$$\begin{cases} 0\leq t_0\leq 3,~h_0=1\\4\leq t_1\leq 6,~h_1=2\\7\leq t_2\leq 10,~h_2=3\\11\leq t_3\leq 13,~h_3=4\\14\leq t_4\leq 17,~h_4=5\\18\leq t_5\leq 20,~h_5=6\\21\leq t_6\leq 24,~h_6=7.\end{cases}$$ We can obtain

\begin{equation*}
	\begin{split}
		T_{1}=	\bigcup_{\substack{0 \leq l \leq 6}}~~
		\cup\bigcup_{\substack{4t_l+h_l+ \alpha\leq j\leq 4(t_l+1)+h_l-1 -\alpha,\\if~  j\leq 14,~0 \leq i\leq\alpha,~else,~0 \leq i\leq\alpha-1}}C_{s+2(iq+j)}\\
	\end{split}
\end{equation*}
where $\alpha=1, 2$ respectively. It is easy to check that $T_1\cap-qT_1=\emptyset$.
Then according to Theorem \ref{th:4.8}, we can obtain EAQECCs as follows: $$[[458,101,244;129]]_{107}, [[458,1,458;457]]_{107}.$$ Especially, they are EAQMDS codes when $d\leq 230$.\\

\noindent\textbf{Example 6}
~Let $m=5$,~then~$a=m^2+1=26$~and~$q=a\xi +m=26\xi +5.$ Let $1\leq \alpha\leq \xi =4,$ we have $q=109$.
Then $n=\frac{q^2+1}{13}=914$ and $~s=\frac{q^2+1}{50}=457$. 
According to Lemma \ref{le:4.6}, we have $0 \leq l \leq 3$ and 
$$\begin{cases} 0<t_0\leq 2,~h_0=1\\2<t_1\leq 5,~h_1=2\\5<t_2\leq 7,~h_2=3\\7<t_3\leq 10,~h_3=4\\\end{cases}.$$ We can obtain

\begin{equation*}
	\begin{split}
		T_{1}=&\bigcup_{\substack{1\leq j \leq 4-\alpha,\\~0\leq i\leq \alpha}}C_{s+2(iq+j)}
		\cup\bigcup_{\substack{106+\alpha \leq j \leq q,\\~0\leq i\leq \alpha-1}}C_{s+2(iq+j)}\\
		&~~\cup\bigcup_{\substack{0 \leq l \leq 3}}~~\cup\bigcup_{\substack{4(2t_l-1)+h_l+ \alpha\leq j\leq 4(2t_l+1)+h_l-1 -\alpha,\\~0 \leq i\leq\alpha-1}}C_{s+2(iq+j)}\\
	\end{split}
\end{equation*}
where $\alpha=1, 2,3,4$ respectively. It is easy to check that $T_1\cap-qT_1=\emptyset$.
Then according to Theorem \ref{th:4.8}, we can obtain the following EAQECCs:\\ $$[[914,468,260;72]]_{109}, [[914,208,478;248]]_{109},$$$$[[914,52,696;528]]_{109},[[914,0,914;912]]_{109}.$$ Especially, they are EAQMDS codes when $d\leq 458$.

\begin{center}
	\begin{sidewaystable}
		$\mathrm{Table~4:}~\mathrm{Some~EAQECCs ~obtained~from~Theorems~\ref{th:4.5}~and ~\ref{th:4.8}}$ 
		\centering 
		\begin{tabular}{|c|c|c|c|c|c|c|}
			\hline
			$m$ &  $a$ &$n$ & $q$ & $\xi $& $q(\xi )$ & $[[n, k, d; c]]_q$\\
			\hline
			&   & &  & $1$& $3$& $[[10, 1, 10; 9]]_3$\\
			&   & &  & $2$ & $5$& $[[26, 4, 16; 8]]_5$, $[[26, 0, 26; 24]]_5$\\
			&   & &  & $3$ & $7$& $[[50, 17, 22; 9]]_7$, $[[50, 5, 36; 25]]_7$, $[[50, 1, 50; 49]]_7$ \\
			$1$ & $2$  &  $q^2+1$ & $2\xi +1$ & $4$ & $9$ & $[[82, 36, 28; 8]]_9$, $[[82, 16, 46; 24]]_9$, $[[82, 4, 64; 48]]_9$, $[[82, 0, 82; 80]]_9$ \\
			&   & &  & $5$& $11$& $[[122, 65, 34; 9]]_{11}$, $[[122, 37, 56; 25]]_{11}$, $[[122, 17, 78; 49]]_{11}$\\
			&   & &  &     & &   $[[122, 5, 100; 81]]_{11}$, $[[122, 1, 122;121]]_{11}$ \\
			&   & &  & $6$ & $13$& $[[170, 100, 40; 8]]_{13}$, $[[170,64, 66; 24]]_{13}$, $[[170, 36, 92; 48]]_{13}$\\
			&   & &  &     &  &   $[[170, 16, 118; 80]]_{13}$, $[[170,4, 144; 120]]_{13}$, $[[170, 0, 170; 168]]_{13}$ \\
			
			\hline
			&   & &  & $1$& $17$& $[[58, 0, 58; 56]]_{17}$\\
			&   & &  $10\xi +7$  & $3$& $37$& $[[274,80,126;56]]_{37}$, $[[274,20,200;144]]_{37}$, $[[274	,	0	,	274	;	272]]_{37}$\\
			
			&   & &  & $4$& $47$& $[[442	,	181	,	160	;	57]]_{47}$, $[[442	,	81	,	254	;	145]]_{47}$, $[[442	,	21	,	348	;	273]]_{47}$, $[[442	,	1	,	442	;	441]]_{47}$\\
			\cline{4-7}
			$3$ &  $10$ &$\frac{q^2+1}{5}$ &  & $1$& $13$& $[[34	,	0	,	34	;	32]]_{13}$\\
			&   & & $10\xi +3$ & $2$& $23$& $[[106	,	21	,	60	;	33]]_{23}$,  $[[106	,	1	,	106	;	105]]_{23}$\\
			&   & &  & $4$& $43$&  $[[370	,	181	,	112	;	33]]_{43}$,  $[[370	,	81	,	198	;	105]]_{43}$,  $[[370	,	21	,	284	;	217]]_{43}$, $[[370	,	1	,	370	;	369]]_{43}$\\
			\cline{4-7}
			\hline
			
			&   & &  & $1$& $47$& $[[170	,	1	,	170	;	169]]_{47}$\\
			&   & &  $26\xi +21$ & $2$& $73$&  $[[410	,	52	,	264	;	168	]]_{73}$, $[[410	,	0	,	410	;	408]]_{73}$ \\
			
			&   & &  & $4$& $125$& $[[1202	,	468	,	452	;	168]]_{125}$, $[[1202	,	208	,	702	;	408]]_{125}$, $[[1202	,	52	,	952	;	752]]_{125}$, $[[1202	,	0	,	1202	;	1200]]_{125}$\\
			\cline{4-7}
			$5$ &  $26$ &$\frac{q^2+1}{13}$ &  & $1$& $31$& $[[74	,	1	,	74	;	73]]_{31}$\\
			&   & & $26\xi +5$ & $3$& $83$& $[[530	,	209	,	198	;	73]]_{83}$, $[[530	,	53	,	364	;	249]]_{83}$, $[[530	,	1	,	530	;	529]]_{83}$\\
			&   & &  & $4$& $109$&  $[[914	,	468	,	260	;	72]]_{109}$,  $[[914	,	208	,	478	;	248]]_{109}$,  $[[914	,	52	,	696	;	528]]_{109}$, $[[914	,	0	,	914	;	912]]_{109}$\\
			\cline{4-7}
			\hline
			
			&   & & $50\xi +43$ & $2$& $143$&  $[[818	,	101	,	532	;	345]]_{143}$, $[[818	,	1	,	818	;	817]]_{143}$ \\
			&   & &    & $3$& $193$& $[[1490	,	400	,	718	;	344]]_{193}$, $[[1490	,	100	,	1104	;	816]]_{193}$, $[[1490	,	0	,	1490	;	1488]]_{193}$\\
			
			\cline{4-7}
			
			$7$ &  $50$ &$\frac{q^2+1}{25}$ &  $50\xi +7$ & $2$& $107$& $[[458	,	101	,	244	;	129]]_{107}$,  $[[458	,	1	,	458	;	457]]_{107}$\\
			&   & &  & $3$& $157$& $[[986	,	400	,	358	;	128]]_{157}$, $[[986	,	100	,	672	;	456
			]]_{157}$, $[[986	,	0	,	986	;	984]]_{157}$\\
			\cline{4-7}
			\hline
			
		\end{tabular}
	\end{sidewaystable}
\end{center}

\section{Conclusion}
Recently, the construction of EAQECCs has attracted so much attention and abundant results have been made by employing classical cyclic, negacyclic and constacyclic codes as code resources. In this paper, a family of EAQECCs of length $n=\frac{2(q^2+1)}{a}$ with general parameters from negacyclic codes is constructed, where $q$ is an odd prime power, $a=m^2+1$ and $m$ is an odd integer. Compared with known parameters in the literature, our EAQECCs can detect and correct more quantum errors thanks to their larger minimum distance. It's worth mentioning that we give a unified rule to construct EAQECCs with form $n=\frac{2(q^2+1)}{a}$, where $a$ and $q$ are defined as before. And we produce some new EAQMDS codes in the meanwhile.

\dse{Acknowledgments}
This research is supported by the National Natural Science Foundation of China (Nos. 12001002, U21A20428, 
12171134, 61972126) and the Natural Science Foundation of Anhui Province (Nos. 2008085QA04, 2108085QA03).

\appendix
	\section{Appendix: The proof of Lemma 4.3}
	\begin{proof}
	(1) For a positive integer $\alpha$ with $1 \leq \alpha \leq \xi $, let 
\iffalse
\begin{equation*}
	\begin{split}
		T_{1}~~&=\bigcup_{\substack{0 \leq l \leq m-1}}\bigcup_{\substack{2t_l\xi+h_l+ \alpha\leq j\leq 2(t_l+1)\xi+h_l-1 -\alpha,\\a_{1l}\lceil \frac{m}{2} \rceil+b_{1l}\lfloor \frac{m}{2} \rfloor\leq t_l \leq a_{2l}\lceil \frac{m}{2} \rceil+b_{2l}\lfloor \frac{m}{2} \rfloor,~h_l=2t_l-l+1,\\if~  j\leq (a-m)\xi+(a-2m),~0 \leq i\leq\alpha,~else,~0 \leq i\leq\alpha-1}}C_{s+2(iq+j)}\\
	\end{split}
\end{equation*}
where $a_{10}=a_{20}=b_{10}=0$, $b_{20}=1$, and \\
$\begin{cases}a_{1l}=a_{1l-1}+1\\b_{1l}=b_{1l-1}\\a_{2l}=a_{2l-1}\\b_{2l}=b_{2l-1}+1\end{cases}$, where $l$ is odd, or
$\begin{cases}a_{1l}=a_{1l-1}\\b_{1l}=b_{1l-1}+1\\a_{2l}=a_{2l-1}+1\\b_{2l}=b_{2l-1}\end{cases}$, where $l$ is even.\\
\fi

\begin{equation*}
	\begin{split}
		T_{1}~~&=\bigcup_{\substack{2t_l\xi+h_l+ \alpha\leq j\leq 2(t_l+1)\xi+h_l-1 -\alpha,\\if~  j\leq (a-m)\xi+(a-2m),~0 \leq i\leq\alpha,~else,~0 \leq i\leq\alpha-1}}C_{s+2(iq+j)}\\
	\end{split}
\end{equation*}
where $0 \leq l \leq m-1$, $a_{1l}\lceil \frac{m}{2} \rceil+b_{1l}\lfloor \frac{m}{2} \rfloor\leq t_l \leq a_{2l}\lceil \frac{m}{2} \rceil+b_{2l}\lfloor \frac{m}{2} \rfloor,~h_l=2t_l-l+1$, $a_{10}=a_{20}=b_{10}=0$, $b_{20}=1$, and  $a_{1l}=a_{1l-1}+l ~{\rm mod} ~2 ,b_{1l}=b_{1l-1}+(l+1) ~{\rm mod} ~2,a_{2l}=a_{2l-1}+(l+1) ~{\rm mod} ~2,b_{2l}=b_{2l-1}+l ~{\rm mod} ~2$.
\
That is
\begin{equation*}
	\begin{split}
		T_{1}&=\bigcup_{\substack{2t_0\xi+2t_0+1+ \alpha\leq j\leq 2(t_0+1)\xi+2t_0 -\alpha,\\0\leq t_0\leq \lfloor \frac{m}{2} \rfloor,~0\leq i\leq \alpha}}C_{s+2(iq+j)}
		~~~\bigcup_{\substack{2t_1\xi+2t_1+ \alpha\leq j\leq 2(t_1+1)\xi+2t_1-1 -\alpha,\\\lceil \frac{m}{2} \rceil\leq t_1\leq 2\lfloor \frac{m}{2} \rfloor,~0\leq i\leq \alpha}}C_{s+2(iq+j)}\\
		&~~\bigcup_{\substack{2t_2\xi+2t_2-1+ \alpha\leq j\leq 2(t_2+1)\xi+2t_2-2 -\alpha,\\\lceil \frac{m}{2} \rceil+\lfloor \frac{m}{2} \rfloor\leq t_2\leq \lceil \frac{m}{2} \rceil+2\lfloor \frac{m}{2} \rfloor,~0\leq i\leq \alpha}}C_{s+2(iq+j)}
		\bigcup_{\substack{2t_3\xi+2t_3-2+ \alpha\leq j\leq 2(t_3+1)\xi+2t_3-3 -\alpha,\\2\lceil \frac{m}{2} \rceil+\lfloor \frac{m}{2} \rfloor\leq t_3\leq \lceil \frac{m}{2} \rceil+3\lfloor \frac{m}{2} \rfloor,~0\leq i\leq \alpha}}C_{s+2(iq+j)}\\
		&\cdots\cdots\\
		&\bigcup_{\substack{2t_{m-1}\xi+2t_{m-1}-m+2+ \alpha\leq j\leq 2(t_{m-1}+1)\xi+2t_{m-1}-m+1 -\alpha,\\ \frac{m^2-m}{2} \leq t_{m-1}\leq  \frac{m^2-1}{2},~0\leq i\leq \alpha-1}}C_{s+2(iq+j)}\\
	\end{split}
\end{equation*}

Then by Lemma \ref{le:4.2}, we have
\begin{equation*}
	\begin{split}
		-qT_{1}&=\bigcup_{\substack{2t_0\xi+2t_0+1+ \alpha\leq j\leq 2(t_0+1)\xi+2t_0 -\alpha,\\0\leq t_0\leq \lfloor \frac{m}{2} \rfloor,~0\leq i\leq \alpha}}C_{s+2(jq-i)}
		~~~\bigcup_{\substack{2t_1\xi+2t_1+ \alpha\leq j\leq 2(t_1+1)\xi+2t_1-1 -\alpha,\\\lceil \frac{m}{2} \rceil\leq t_1\leq 2\lfloor \frac{m}{2} \rfloor,~0\leq i\leq \alpha}}C_{s+2(jq-i)}\\
		&~~\bigcup_{\substack{2t_2\xi+2t_2-1+ \alpha\leq j\leq 2(t_2+1)\xi+2t_2-2 -\alpha,\\\lceil \frac{m}{2} \rceil+\lfloor \frac{m}{2} \rfloor\leq t_2\leq \lceil \frac{m}{2} \rceil+2\lfloor \frac{m}{2} \rfloor,~0\leq i\leq \alpha}}C_{s+2(jq-i)}
		\bigcup_{\substack{2t_3\xi+2t_3-2+ \alpha\leq j\leq 2(t_3+1)\xi+2t_3-3 -\alpha,\\2\lceil \frac{m}{2} \rceil+\lfloor \frac{m}{2} \rfloor\leq t_3\leq \lceil \frac{m}{2} \rceil+3\lfloor \frac{m}{2} \rfloor,~0\leq i\leq \alpha}}C_{s+2(jq-i)}\\
		&\cdots\cdots\\
		&\bigcup_{\substack{2t_{m-1}\xi+2t_{m-1}-m+2+ \alpha\leq j\leq 2(t_{m-1}+1)\xi+2t_{m-1}-m+1 -\alpha,\\ \frac{m^2-m}{2} \leq t_{m-1}\leq  \frac{m^2-1}{2},~0\leq i\leq \alpha-1}}C_{s+2(jq-i)}\\
	\end{split}
\end{equation*}

When $0\leq t_0\leq \lfloor \frac{m}{2} \rfloor$ and $0\leq i\leq \alpha$, then\\
$$\begin{cases}1+\alpha \leq j_{01}\leq 2\xi-\alpha,\\2\xi+3+\alpha \leq j_{02}\leq 4\xi+2-\alpha,\\4\xi+5+\alpha \leq j_{03}\leq 6\xi+4-\alpha,\\\cdots\\(m-1)\xi+m+\alpha \leq j_{0\frac{m-1}{2}}\leq (m+1)\xi+m-1-\alpha. \end{cases}$$
And the following results follow.

When $1+\alpha \leq j_{01}\leq 2\xi-\alpha$, we have 
$$i_{01}q+j_{01}\leq \alpha q+2\xi-\alpha,~(2\xi-\alpha)q-\alpha \leq j_{01}q-i_{01}.$$

When $2\xi+3+\alpha \leq j_{02}\leq 4\xi+2-\alpha$, we have 
$$i_{02}q+j_{02}\leq \alpha q+4\xi+2-\alpha,~(4\xi+2-\alpha)q-\alpha \leq j_{02}q-i_{02}.$$

When $4\xi+5+\alpha \leq j_{03}\leq 6\xi+4-\alpha$, we have 
$$i_{03}q+j_{03}\leq \alpha q+6\xi+4-\alpha,~(6\xi+4-\alpha)q-\alpha \leq j_{03}q-i_{03}.$$
$$\cdots\cdots$$

When $(m-1)\xi+m+\alpha \leq j_{0\frac{m-1}{2}}\leq (m+1)\xi+m-1-\alpha$, we have 
$$i_{0\frac{m-1}{2}}q+j_{0\frac{m-1}{2}}\leq \alpha q+(m+1)\xi+m-1-\alpha,~[(m+1)\xi+m-1-\alpha]q-\alpha \leq j_{0\frac{m-1}{2}}q-i_{0\frac{m-1}{2}}.$$

We have similar results for $t_0,~t_1,~t_2,~\cdots,~t_{\frac{m-1}{2}}$ in all of these cases.

It is easy to check that 
$$s+2(i_{01}q+j_{01})<s+2(j_{01}q-i_{01}),~s+2(i_{01}q+j_{01})<s+2(j_{02}q-i_{02}),$$
$$s+2(i_{01}q+j_{01})<s+2(j_{03}q-i_{03}),~s+2(i_{01}q+j_{01})<s+2(j_{04}q-i_{04}),$$
$$\cdots\cdots$$
$$s+2(i_{01}q+j_{01})<s+2(j_{0\frac{m-1}{2}}q-i_{0\frac{m-1}{2}}),$$
$$\cdots\cdots$$
$$s+2(i_{01}q+j_{01})<s+2(j_{1\frac{m-3}{2}}q-i_{1\frac{m-3}{2}}),$$
$$\cdots\cdots$$
$$s+2(i_{01}q+j_{01})<s+2(j_{(m-1)\frac{m-1}{2}}q-i_{(m-1)\frac{m-1}{2}}).$$
In a similar way, we have 
$$s+2(i_{0\frac{m-1}{2}}q+j_{0\frac{m-1}{2}})<s+2(j_{01}q-i_{01}),$$
$$\cdots$$ $$s+2(i_{0\frac{m-1}{2}}q+j_{0\frac{m-1}{2}})<s+2(j_{(m-1)\frac{m-1}{2}}q-i_{(m-1)\frac{m-1}{2}})$$
$$\cdots$$
$$s+2(i_{1\frac{m-3}{2}}q+j_{1\frac{m-3}{2}})<s+2(j_{01}q-i_{01}),$$
$$\cdots$$ $$s+2(i_{1\frac{m-3}{2}}q+j_{1\frac{m-3}{2}})<s+2(j_{(m-1)\frac{m-1}{2}}q-i_{(m-1)\frac{m-1}{2}})$$
$$\cdots$$
$$s+2(i_{(m-1)\frac{m-1}{2}}q+j_{(m-1)\frac{m-1}{2}})<s+2(j_{01}q-i_{01}),$$
$$\cdots$$ $$s+2(i_{(m-1)\frac{m-1}{2}}q+j_{(m-1)\frac{m-1}{2}})<s+2(j_{(m-1)\frac{m-1}{2}}q-i_{(m-1)\frac{m-1}{2}}).$$

Note that
$iq+j\leq (a-1)\xi +a-2m^2+ 4m-3$, the subscripts of $C_{s+r(u_iq+v_i)}$ is the biggest number in the set. Then $T_1\cap-qT_1=\emptyset$. The desired results follow.	
	
	\end{proof}

	\end{document}